\def\bd{{\partial}}
\newtheorem{theorem}{Theorem}[section]
\newtheorem{definition}[theorem]{Definition}
\newtheorem{lemma}[theorem]{Lemma}
\newcommand{\mathset}[1]{\ensuremath {\mathbb {#1}}}
\newcommand{\N}{\mathset {N}}
\newcommand{\R}{\mathset{R}}
\newcommand\C{{\mathcal C}}
\DeclareMathOperator{\diam}{diam}
\DeclareMathOperator{\dist}{d}
\DeclareMathOperator{\NIL}{NIL}
\DeclareMathOperator{\PD}{PD}
\DeclareMathOperator{\NN}{NN}
\DeclareMathOperator{\Q}{\mathcal{Q}}
\DeclareMathOperator{\wspdone}{wspd1}
\DeclareMathOperator{\wspdtwo}{wspd2}
\title{Spanners for Directed
  Transmission Graphs\footnote{This work is supported in part by GIF
  project 1161,  DFG project MU/3501/1 and ERC StG 757609. A
  preliminary version appeared as
  Haim Kaplan, Wolfgang Mulzer, Liam Roditty, and Paul Seiferth.
\emph{Spanners and Reachability Oracles for Directed Transmission Graphs.}
Proc. 31st SoCG, pp.~156--170.}}
\author{Haim Kaplan\thanks{School of Computer Science, Tel Aviv University,
 Israel, \texttt{haimk@post.tau.ac.il}} \and
 Wolfgang Mulzer\thanks{Institut f\"ur Informatik,
Freie Universit\"at Berlin,
  Germany
  \texttt{\{mulzer,pseiferth\}@inf.fu-berlin.de}} \and
Liam Roditty\thanks{Department of Computer Science, Bar Ilan University,
  Israel
  \texttt{liamr@macs.biu.ac.il}} \and
Paul Seiferth\footnotemark[3]}
\begin{document}
\maketitle

\begin{abstract}
Let $P \subset \R^2$ be a planar $n$-point set
such that each point $p \in P$ has an
\emph{associated radius} $r_p > 0$.
The \emph{transmission graph} $G$ for $P$ is the
directed graph with vertex set
$P$ such that for any $p, q \in P$, there is
an edge from $p$ to $q$ if and only if
$d(p, q) \leq r_p$.

Let $t > 1$ be a constant. A \emph{$t$-spanner} for $G$ is
a subgraph $H \subseteq G$ with vertex set $P$ so that
for any two vertices $p,q \in P$, we have
$d_H(p, q) \leq t d_G(p, q)$, where $d_H$ and
$d_G$ denote the shortest path distance in $H$
and $G$, respectively (with Euclidean edge lengths).
We show how to compute
a $t$-spanner for $G$ with $O(n)$ edges in
$O(n (\log n + \log \Psi))$ time, where $\Psi$
is the ratio of the largest and smallest radius of a point in $P$.
Using more advanced data structures, we obtain a
construction that runs in $O(n \log^5 n)$ time, independent
of $\Psi$.

We  give two applications for our spanners.
First, we  show how to use our spanner
 to find a BFS tree in $G$ from any given start vertex
in $O(n \log n)$ time (in addition to the time it takes to build
the spanner). Second, we
show how to use our spanner to extend a reachability oracle to
answer geometric reachability queries.
In a \emph{geometric reachability query} we ask whether a
vertex $p$ in $G$ can ``reach''
a target $q$ which is an arbitrary point in the plane (rather
than restricted to be another vertex $q$ of $G$ in a
standard reachability query).
 Our spanner allows the reachability oracle  to
answer geometric reachability queries with an additive overhead
of $O(\log n\log \Psi)$ to
the query time and $O(n \log \Psi)$ to the space.
\end{abstract}

\section{Introduction}
A common model for wireless sensor networks is the \emph{unit-disk graph}:
each sensor $p$ is modeled by a unit disk centered at $p$, and there
is an edge between two sensors if and only if their disks
intersect~\cite{Clark90}.
Intersection graphs of disks with arbitrary radii have also been
used to model sensors with different
transmission strengths~\cite[Chapter~4]{Boukerche08}.
Intersection graphs of disks are undirected. However,
for some networks we may want a directed model. In such networks, a
sensor $p$ that can transmit information to a sensor $q$ may not be
able to  receive information from $q$. This motivated various
researchers to consider what  we call here
\textit{transmission graphs}\cite{RickenbachEtAl09,PelegRoditty10}.
A transmission graph $G$ is defined for
a set $P$ of points where
each point $p \in P$ has a (transmission) radius $r_p$ associated
with it.
Each vertex of $G$ corresponds to a point of $P$, and there is
a directed  edge from $p$ to $q$ if and only if $q$ lies in the
disk $D(p)$ of radius $r_p$ around $p$.
We weight each edge $pq$ of $G$ by the distance between $p$
and $q$, denoted by $|pq|$.

As many other kinds of geometric intersection graphs,
a transmission graph
may be dense and may contain $\Theta(n^2)$ edges.
Thus, if one applies a standard graph algorithm, like breadth first 
search (BFS),
to a dense transmission graph, it
runs slowly, since it requires an explicit representation of all 
the edges in the graph.
For some applications a sparse approximation of $G$ that
preserves distances suffices.
Therefore, given a transmission graph $G$, implicitly
represented by a list of points and their associated radii,
it is desirable to construct a sparse approximation of $G$ that
preserves its
connectivity and proximity properties. We want to construct
this approximation efficiently, without generating an
explicit representation of $G$.

For any $t > 1$, a  subgraph $H$ of $G$ is a \emph{$t$-spanner} for $G$ if
the distance between any pair of vertices $p$ and $q$ in $H$ is at most $t$
times the distance  between  $p$ and $q$ in $G$,
i.e., $d_H(p,q) \le t\cdot d_G(p,q)$ for any pair $p,q$
(see \cite{NarasimhanSmid07} for an overview of spanners for geometric graphs).
F\"urer and Kasivisawnathan show how
to compute a $t$-spanner
for unit- and general disk
graphs that are variations of the Yao graph \cite{FuererKasiviswanathan12, Yao82}.
Peleg and Roditty \cite{PelegRoditty10} give a construction
for $t$-spanners in transmission graphs in any metric space with bounded
doubling
dimension.
We continue these studies by giving
an almost linear time algorithm that constructs a $t$-spanner
of a transmission graph of a planar set of points
 ($P \subset \R^2$) in which the edges are weighted according 
 to the Euclidean metric
(i.e.\ $|pq|$ is the Euclidean distance between $p$ and $q$).

Our construction is also based on the \emph{Yao
graph}\cite{Yao82}.
The basic Yao graph is a $t$-spanner for the complete
graph defined by $n$ points
in the plane (with Euclidean distances as the weights of the
edges).
To determine the points adjacent to a particular
point $q$, we divide the plane by equally spaced rays
emanating from $q$ and connect $q$ to its closest point in
each wedge
(the number of wedges increases as $t$ gets smaller).
Adapting this construction to transmission graphs poses a severe
computational difficulty, as
we want to consider, in each wedge, only the points $p$
with $q \in D(p)$ and to pick the closest point to $q$ only
among those.
Since finding the exact closest point turns out to be difficult, we
need to relax this requirement in a subtle way,
without hurting the approximation too much.
This makes it possible to construct the spanner efficiently.

Even with a good $t$-spanner at hand,
we sometimes wish to obtain exact solutions for certain problems on disk graphs.
Working in this direction,
Cabello and Jej\^ci\^c gave an $O(n\log n)$
time algorithm for
computing a BFS tree in a unit-disk graph, rooted at any given
vertex~\cite{CabelloJejcic15}.
For this, they exploited the  special
structure of the Delaunay triangulation of the disk centers.
We show that our spanner admits similar properties for transmission graphs.
As a first application of our spanner,
we get an efficient algorithm to compute a BFS tree  in a 
transmission graph rooted at any given vertex.

For another application, we consider
 \emph{reachability oracles}. A reachability oracle is a data structure that can answer
\emph{reachability queries}: given two
vertices $s$ and $t$  determine if there is a
directed path from $s$ to $t$.
The quality of a reachability oracle
is measured by its query time, its space requirement, and its preprocessing time.
For transmission graphs, we can ask for a more general
\emph{geometric} reachability query: given a vertex $s$ and \emph{any} point $q \in \R^2$, determine if
 there is a vertex $t$ such that
there is a directed path from $s$ to $t$ in $G$, and $q$ lies in the
disk of $t$. We show how to extend
any given reachability oracle to answer geometric queries with a
small \emph{additive} increase in space and query time.

\paragraph*{Our Contribution and the Organization of the Paper.}
An extended abstract of this work was presented
at the 31st International Symposium on Computational 
Geometry~\cite{KaplanMuRoSe15}. 
This abstract also discusses the problem of constructing
efficient \emph{reachability oracles} for transmission graphs. 
While we were preparing the
journal version, it turned out that a full description of 
our results would yield a large and unwieldy manuscript. Therefore,
we decided to split our study on transmission graphs into two parts,
the present paper that studies fast algorithms
for spanners in transmission graphs, and a companion paper that 
deals with the construction of efficient 
reachability oracles ~\cite{KaplanEtAl15b}.

In Section~\ref{sec:spanners}, we show how to compute, for every fixed
$t > 1$, a $t$-spanner $H$ of $G$. Our construction is quite
generic and can be adapted to several situations.
In the simplest case, if the \emph{spread} $\Phi$
(i.e., the ratio between the largest and the smallest distance in $P$)
is bounded,
we can obtain a $t$-spanner in time $O(n(\log n + \log \Phi))$
(Section~\ref{sec:spanner}).
With a little more work, we can weaken the assumption to
a bounded \emph{radius ratio} $\Psi$ (the ratio between the largest and
smallest radius in $P$), giving a running time of $O(n(\log n + \log \Psi))$
(Section~\ref{sec:spannerPsi}).
Note that a bound on $\Phi$ implies a bound on $\Psi$: let $d_{\max}$ be the
largest distance and $d_{\min}$ be the smallest distance between any pair
of distinct points in $P$. We can set all radii larger than $d_{\max}$ to
be $d_{\max}$ and all radii smaller than $d_{\min}$ to $d_{\min}/2$. This does
not change the transmission graph and we have $\Psi \leq 2 \Phi$.
Using even more advanced data structures, we can compute
a $t$-spanner in time $O(n \log^5 n)$,
without any dependence on $\Phi$ or $\Psi$ (Section~\ref{sec:spannerChan}).

In Section~\ref{sec:bfstree} we show how to
 adapt a result by Cabello and
Jej\^ci\^c~\cite{CabelloJejcic15} to  compute a BFS tree in a 
transmission graph, from any given vertex $p \in P$,
in $O(n \log n)$ time, once we have the spanner ready.

In Section~\ref{sec:oracles} we show how to
 use a spanner to extend a reachability oracle to answer
geometric reachability queries. Specifically, we show that any reachability oracle
for a transmission graph with radius ratio $\Psi$, that requires $S(n)$ space, and
answers a query in
$Q(n)$ time,  can be
extended in $O(n \log n \log \Psi)$ time, to an oracle
that
can answer geometric reachability queries,  requires
$S(n) + O(n \log \Phi)$ space, and answers a query in $Q(n) + O(\log n \log \Phi)$ time.

\section{Preliminaries and Notation}
\label{sec:prelims}
\begin{wrapfigure}{R}{0.4\textwidth}
\vspace{-20pt}
\begin{center}
\includegraphics[scale=0.6]{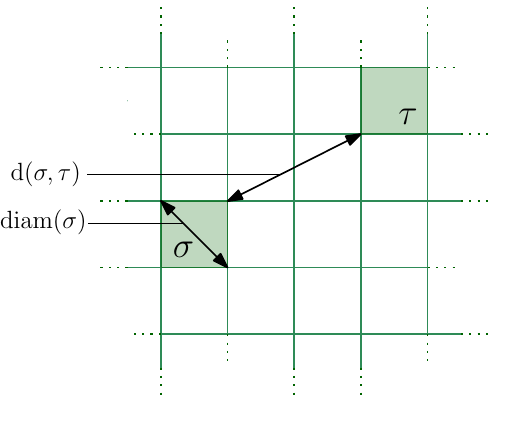}
\end{center}
\vspace{-20pt}
\caption{The grid (green) and two cells $\sigma$ and $\tau$.}
\label{fig:grid}
\end{wrapfigure}
We let $P \subset \R^2$ denote a set of $n$ points in the plane. Each
point $p\in P$ has a
\emph{radius} $r_p > 0$ associated with it.
The elements in $P$ are called \emph{sites}.
The \emph{spread} of $P$, $\Phi$, is defined as
$\Phi = \max_{p,q \in P} |pq| / \min_{p \neq q \in P} |pq|$, and
the \emph{radius ratio} $\Psi$ of $P$ is defined as
$\Psi = \max_{p, q \in P} (r_p / r_q)$. A simple
volume argument shows that $\Phi = \Omega(n^{1/2})$.
Furthermore, as stated in the introduction, we can
always assume that $\Psi \leq 2\Phi$.
Given a point $p \in \R^2$ and a radius $r$, we denote by $D(p, r)$
the closed disk with center $p$ and radius $r$. If $p \in P$, we
use $D(p)$ as a shorthand for $D(p, r_p)$. We write
$C(p, r)$ for the boundary circle of $D(p, r)$.

Our constructions make extensive use of planar grids. For $i \in \{0, 1,
\dots\}$,
we define $\Q_i$ to be the \emph{grid at level $i$}. It consists of
axis-parallel squares with diameter $2^i$ that partition the
plane in a grid-like fashion (the \emph{cells}).
We write $\diam(\sigma)$ for the diameter of a grid cell $\sigma$.
Each grid $\Q_i$ is aligned so that the origin lies at the corner of a cell.
The \emph{distance} $d(\sigma, \tau)$ between two grid cells $\sigma, \tau$ is
the smallest distance between
any pair of points in $\sigma \times \tau$, see  Figure~\ref{fig:grid}.
We assume that our model of computation allows us to find in constant time
for any given point the grid cell containing  it.

\section{Spanners for Directed Transmission Graphs}
\label{sec:spanners}
\subsection{Efficient Spanner Construction for a Set of Points with Bounded Spread}
\label{sec:spanner}

First, we give a spanner construction for the transmission
graph whose running time
depends on the spread.
Later, in Section~\ref{sec:spannerPsi}, we will tune this
construction so that the running time depends
on the radius ratio.
 The main result which we prove in this section is as follows.

\begin{theorem}
\label{thm:2dspannerSpread}
  Let $P$ be a set of $n$ points in the plane
  with spread $\Phi$.
  For any fixed $t > 1$,
  we can compute, in  $O(n\log \Phi)$ time,
  a $t$-spanner for the transmission graph $G$ of $P$.
  The construction needs  $O(n \log \Phi)$ space.
\end{theorem}
\begin{wrapfigure}{L}{0.46\textwidth}
\includegraphics[scale=1.2]{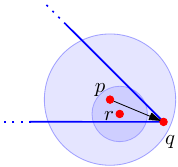}
\centering
\caption{A cone $C_q$ (blue) at a site $q$. Since $q \notin D(r)$, we pick the
edge $pq$.}
\label{fig:yao-edges}
\end{wrapfigure}
Let $\rho$ be a ray originating
from the origin and let $0 < \alpha < 2\pi$. A \emph{cone} with \emph{opening
angle} $\alpha$
and \emph{middle axis} $\rho$ is the closed region
containing $\rho$ and bounded by the two rays obtained by rotating $\rho$
clockwise and counterclockwise by $\alpha/2$.

Given a cone $C$ and a point $q \in \R^2$, we write $C_q$ for the
copy of $C$ obtained by translating the origin to $q$. We call $q$
the \emph{apex} of $C_q$.
Ideally, our spanner should look as follows. Let $\C$ be a set
of $k$ cones with opening angles $2\pi/k$
that partition the plane.
For each site $q \in P$ and each cone $C \in \C$,
we pick
the site $p \in P \cap C_q$ with $q \in D(p)$ that
is closest to $q$ (see Figure~\ref{fig:yao-edges}). We add the edge
$pq$ to $H$. The resulting graph has
$O(kn)$ edges. Using standard techniques, one can show that
$H$ is a $t$-spanner, if $k$ is large
enough as a function of $t$. This construction has been reported before and
seems to be folklore~\cite{Carmi14,PelegRoditty10}.

Unfortunately, the standard algorithms for computing the Yao graph do not
seem to adapt easily to our setting without a penalty in their running times~\cite{ChangEtAl90}.
The problem is that for each site $q$ and each cone $C_q$, we need
to search for a  nearest neighbor of $q$ only among those sites $p \in C_q$ such that
$q \in D(p)$. This seems to be hard to do with
the standard approaches.
Thus, we modify the construction to search only for an \emph{approximate} nearest neighbor of $q$ and argue
that picking an approximately shortest edge in each cone suffices to
obtain a spanner.

We partition each cone $C_q$
into ``intervals''
obtained by intersecting $C_q$ with annuli around $q$
whose inner and outer radii grow exponentially;
see Figure~\ref{fig:discretized-cone}.
There can be only $O(\log \Phi)$ non-empty intervals.
We cover each such interval
by $O(1)$ grid cells whose diameter is
``small'' compared to the width of the interval.
This gives two useful properties.
(i) We only need to consider edges from the  interval
closest to $q$ that contains sites with outgoing edges to $q$;
all other edges to $q$ will be longer.
(ii) If there are multiple edges from the same grid cell, their
endpoints are close together, and it
suffices to consider only one of them.

\begin{figure}[htbp]
 \centering
 \includegraphics[scale=0.5]{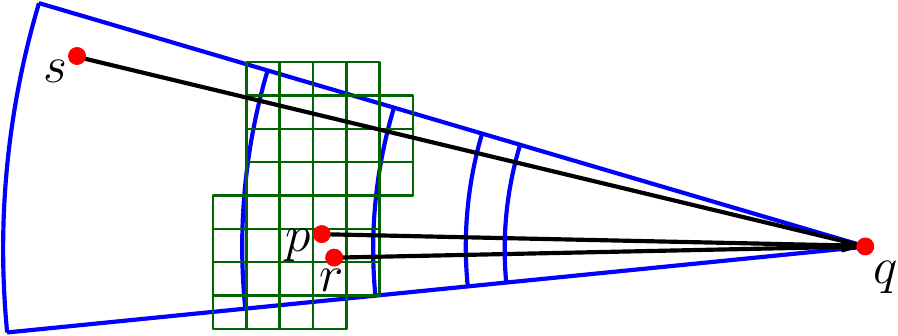}
 \caption{A cone $C_q$ covered by discretized intervals. We only need
one of the edges $pq$, $rq$ for $H$.}
 \label{fig:discretized-cone}
\end{figure}

To make this approach more concrete, we define a decomposition of $P$
into pairs of subsets of $P$ contained in certain grid cells.
These pairs represent a discretized version of the intervals
(see Definition~\ref{def:decomposition} below). This is motivated by another
spanner construction based on the \emph{well-separated pair decomposition}
(WSPD).
Let $c > 1$ be a parameter. A $c$-WSPD for $P$ is
a set of pairs $(A_i,B_i),\dots, (A_m,B_m)$ such that $A_i,B_i \subseteq
P$, and
for each pair  $a,b$ of points of $P$ there is a single index $j$ such
that $a\in A_j$ and $b\in B_j$ or vice versa. Furthermore, for any
$1 \leq
i \leq m$
 we have that $c\max\{\diam(A_i),\diam(B_i)\} \leq d(A_i,B_i)$. Here $\diam(A_i)$ is the diameter of $A_i$ and
$d(A_i,B_i)$ is the
minimum distance between any pair $a,b$ with $a \in A_i$ and $b \in B_i$.
Callahan and Kosaraju show that there always exists a WSPD with $m=O(n)$ pairs
which can be computed efficiently~\cite{CallahanKo95}.

It is well known~\cite{NarasimhanSmid07} that one can obtain a $t$-spanner
for the complete (undirected) Euclidean graph with vertex
set $P$
from a $c$-WSPD, for a large enough $c=c(t)$, by putting in the spanner an
edge $ab$ for each pair $(A_i,B_i)$ in the WSPD, where $a$ is an arbitrary point in $A_i$ and
$b$ is an arbitrary point in $B_i$.
It turns out that a similar approach works for transmission graphs. However,
since they are directed, we need to find for \emph{each} site in $B_i$
an incoming edge from a site in $A_i$, if such an edge exists, and vice versa.
This causes two difficulties: we cannot afford to check all possible edges in
$A_i \times B_i$, since this would lead to a quadratic running time, and we
cannot control the indegree of a site $p$ since it may belong to many sets $A_i$ and $B_i$.
We address the second problem by taking only $O(1)$ edges into a particular site $q$, within each of the $k$
cones of the Yao construction described above. For the first problem, we
identify in each $A_i$ a special subset  that ``covers'' all edges from a site in
 $A_i$ to a site in $B_i$, such that each
  site appears
in a constant number of such subsets.

The concrete implementation of this idea is captured by
Definition~\ref{def:decomposition}. A pair $(A_i,B_i)$ corresponds to
sets $P \cap \sigma$ and $P \cap \tau$ for two grid cell
$\sigma,\tau$ that have the same diameter and that are well separated
(Property~(i)). For a grid cell $\tau$, we denote by $m_\tau$ the site of largest radius in $P \cap
\tau$ and we define a particular subset $R_\tau\subseteq P \cap
\tau$ to be the set of sites  \emph{assigned} to $\tau$.
 Property~(ii) in Definition~\ref{def:decomposition} guarantees that each edge $pq$ of $G$ with
$q \in \sigma$ and $p \in \tau$
  is either ``represented'' in
the decomposition by an edge originating in $m_\tau$  or we have that $p\in R_\tau$.
Specifically, edges
$pq$ with $q \in P \cap \sigma$ and
$p \in P \cap \tau$  such that the disk $D(p)$ is
``large'' relative to $|pq|$ are represented by the edge $m_{\sigma}q$. This allows us
to define the sets $R_\sigma$ such that
each site appears in $O(1)$ such sets, see Figure~\ref{fig:decomposition}.
\begin{definition}
\label{def:decomposition}
Let $c > 2$ and
let $G$ be the transmission graph of a planar point set $P$.
A $c$\emph{-separated annulus decomposition} for $G$ consists
of a finite set  $\Q \subset \bigcup_{i=0}^\infty \Q_i$ of \emph{grid cells},
a symmetric \emph{neighborhood relation} $N \subseteq \Q \times \Q$ between
these
cells,
and a subset of \emph{assigned sites} $R_\sigma \subseteq P \cap \sigma$ for each
grid cell $\sigma \in \Q$.
A $c$\emph{-separated annulus decomposition} for $G$ has the following properties:
\begin{enumerate}[(i)]
\item
For every $(\sigma,\tau)\in N$, $\diam(\sigma) = \diam(\tau)$,
and
$d(\sigma,\tau) = \gamma \diam(\sigma)$, for some $\gamma \in [c-2, 2c)$.
\item for every edge $pq$ of $G$, there is a
pair $(\sigma,\tau) \in N$ with $q \in \sigma$, $p \in \tau$, and
either $p \in R_{\tau}$ or $q \in D(m_{\tau})$.
\end{enumerate}
\end{definition}
The following fact is a direct consequence of Definition~\ref{def:decomposition}.
For each cell $\sigma \in \Q$, we define its \emph{neighborhood} as
$N(\sigma) = \{\tau \mid (\sigma, \tau) \in N\}$.
\begin{lemma}\label{lem:volume}
For each cell $\sigma \in \Q$, we have
$|N(\sigma)| = O(c^2)$, and for each cell $\tau \in \Q$ the number of cells
$\sigma \in \Q$ such that $\tau \in N(\sigma)$ is $O(c^2)$.
\end{lemma}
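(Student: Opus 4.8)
The plan is to use a standard volume/packing argument, exploiting the fact that the neighborhood relation only connects cells of equal diameter. Fix a cell $\sigma \in \Q$, say at level $i$, so that $\diam(\sigma) = 2^i$ and $\sigma$ is an axis-parallel square of side length $\Theta(2^i)$. By Property~(i) of Definition~\ref{def:decomposition}, every $\tau \in N(\sigma)$ satisfies $\diam(\tau) = \diam(\sigma) = 2^i$, so $\tau$ is also a cell of the uniform grid $\Q_i$. This is the crucial structural observation: all cells in $N(\sigma)$ live in a single grid, hence they are pairwise interior-disjoint squares of the same size, and a clean packing bound will apply.

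Next I would bound the spatial extent of $N(\sigma)$. Again by Property~(i), $d(\sigma,\tau) = \gamma \diam(\sigma) < 2c \cdot 2^i$ for each $\tau \in N(\sigma)$. Since $\tau$ itself has diameter $2^i$, all of $\tau$ lies within distance $O(c \cdot 2^i)$ of $\sigma$. Thus every such $\tau$ is contained in the axis-parallel box $B$ obtained by enlarging $\sigma$ by $O(c \cdot 2^i)$ on each side, whose area is $O(c^2 \cdot 2^{2i})$. As the cells of $N(\sigma)$ are interior-disjoint level-$i$ squares, each of area $\Theta(2^{2i})$, all contained in $B$, dividing areas gives $|N(\sigma)| = O(c^2)$, which proves the first claim.

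For the second claim I would simply invoke the symmetry of $N$. Because $N \subseteq \Q \times \Q$ is symmetric, we have $\tau \in N(\sigma) \iff (\sigma,\tau) \in N \iff (\tau,\sigma) \in N \iff \sigma \in N(\tau)$. Hence the number of cells $\sigma$ with $\tau \in N(\sigma)$ equals $|N(\tau)|$, which is $O(c^2)$ by the first part applied to $\tau$.

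The argument is elementary, and I do not expect a genuine obstacle. The only point requiring care is verifying that Property~(i) forces all neighbors to share a single grid level, so that the packing bound is applied to uniform, pairwise disjoint cells of known area. Without the diameter-equality condition one would have to account separately for cells at every level inside $B$, and the clean $O(c^2)$ bound would no longer follow from a single area comparison.
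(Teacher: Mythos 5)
Your proof is correct and is exactly the ``standard volume argument'' from Definition~\ref{def:decomposition}(i) that the paper's one-line proof invokes: equal-diameter interior-disjoint cells within distance $O(c)\diam(\sigma)$ of $\sigma$ are packed into a box of area $O(c^2\diam(\sigma)^2)$, and the second claim follows since $N$ is symmetric. No gaps; your write-up simply makes explicit what the paper leaves implicit.
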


\begin{proof}
This follows from Definition~\ref{def:decomposition}(i) via a
standard volume argument.
\end{proof}

\begin{figure}[htb]
\centering
\begin{subfigure}[b]{0.4\textwidth}
\centering
\includegraphics[width=0.7\textwidth]{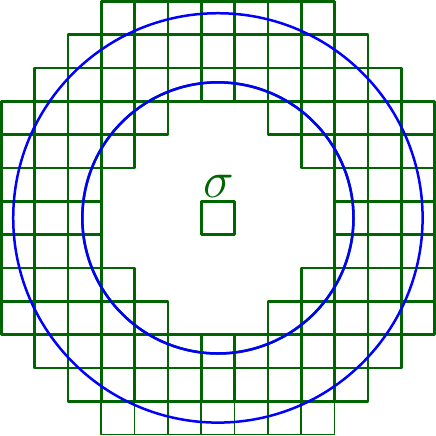}
  \caption{By Property~(i) in Definition~\ref{def:decomposition}
 $N(\sigma)$ covers an annulus.}
\end{subfigure}
\begin{subfigure}[b]{0.53\textwidth}
\centering
\includegraphics[width=0.7\textwidth]{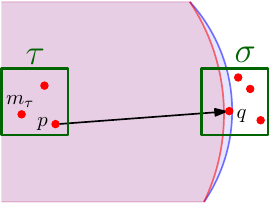}
  \caption{Since $D(m_\tau)$ (red) does not contain $q$, we need  to put $p$ in
$R_\tau$
to cover the edge $pq$ (Property~(ii)).}
\end{subfigure}
\caption{Illustration of Definition~\ref{def:decomposition}}
\label{fig:decomposition}
\end{figure}
Given this decomposition, we first present a simple (and rather inefficient)
rule for picking incoming edges such that the resulting graph is a $t$-spanner.
Then we explain how to compute the decomposition using
a \emph{quadtree}. Finally, we exploit the quadtree
to make the spanner construction efficient.

\paragraph*{Obtaining a Spanner.}
Let $t > 1$ be the desired stretch.
We  pick a suitable separation
parameter $c$ and a number of cones $k$ that depend on $t$, as specified later.
Let $(\Q, N, R_\sigma)$ be a $c$-separated annulus
decomposition for $G$.
For a cone $C \in \C$ and an integer $\ell \in \N$,
we define $C^\ell$
as the cone with the same middle axis as $C$
but with an opening angle $\ell$ times larger than the opening angle of
$C$.
For $\sigma \in \Q$,
let $C_\sigma$ be the copy of $C$ with
the center of $\sigma$ as the apex.

To obtain a $t$-spanner $H \subseteq G$,
we pick the incoming edges
for each site $q \in P$ and each cone $C \in \C$
as follows (see Algorithm~\ref{alg:edgeselection}).
We consider the cells of $\Q$ containing $q$ in increasing
order of diameter. Let $\sigma$ be one such cell containing $q$ that
we process. We traverse all neighboring cells $\tau$ of $\sigma$,
 that are contained in $C_\sigma^2$.
For each such neighboring cell $\tau$, we check if there exists a
site $r\in R_\tau \cup \{m_\tau\}$ that has an outgoing edge to $q$.
If such a site exists, we
add to $H$ an edge to $q$ from a single, arbitrary, such site $r$.
After considering \emph{all} neighbors $\tau$ of $\sigma$
we terminate the  processing
of  $q$  and $C$ if we added at least one edge incoming to  $q$.
If we have not added any edge into $q$ while processing all neighbors $\tau$
of $\sigma$ we continue
  to the next largest
cell containing $q$.  We use here
the extended cones $C_\sigma^2$ (instead of the
cone $C_q$) to gain certain
flexibility that will be useful
for later extensions of Algorithm~\ref{alg:edgeselection}.

\begin{algorithm}[htb]
$\Q_q \gets $ cells of $\Q$ that contain $q$\\
Sort the cells in $\Q_q$ in increasing order by diameter\\
Make $q$ \emph{active}\\
\While{$q$ \textnormal{is active}}{
  $\sigma \gets$ next largest cell in $\Q_q$\\
  \ForEach{\textnormal{cell }$\tau \in N(\sigma)$ \textnormal{that is
contained in} $C_\sigma^2$}{
  \label{line:conetest}
    \If{\textnormal{there is a} $r \in R_{\tau} \cup \{m_{\tau} \}$
     \textnormal{with} $q \in D(r)$} {
      \label{line:edgeselection}
      Take an arbitrary such $r$, add the edge $rq$ to $H$,
      and set $q$ to \emph{inactive}
    }
  }
}
\caption{Selecting the incoming edges for $q$ and the cone $C$.}
\label{alg:edgeselection}
\end{algorithm}

For each cone $C \in \C$ and each site $q \in P$ there is only one cell
$\sigma \in \Q_q$ that produces incoming edges for $q$.
We have $k$ cones and $|N(\sigma)| = O(c^2)$ by Lemma~\ref{lem:volume}, so
$q$ has $O(c^2k)$ incoming edges.
It follows that the size of $H$ is $O(n)$ since
 $c$ and $k$ are constants.

Next we show that $H$ is a $t$-spanner. For this,
we show that every edge $pq$ of $G$ is represented in $H$ by
an approximate path.
We prove this by induction on the ranks of the
edge lengths. This is done in a similar manner as for the standard Yao
graphs, but with a few twists that require three additional technical
lemmas.
Lemma~\ref{lem:centercone} deals with
the imprecision introduced by taking
the cone $C_\sigma^2$ instead of $C_q$. It follows from this lemma that
if $pq$ is contained in the cone $C_q$ then
Algorithm~\ref{alg:edgeselection} picks at least one edge $rq$ with
$r \in C_q^4$.
Lemma~\ref{lem:annulusDiameter} and Lemma~\ref{lem:lengthofchosenedge}
encapsulate geometric facts
that are used to bound the distance between
the endpoints $r$  and $p$ depending on whether $|rq|$ is
larger or smaller than $|pq|$.
Lemma~\ref{lem:lengthofchosenedge} is due to Bose et
al.~\cite{BoseEtAl12} and for completeness we include their proof.

\begin{lemma}
\label{lem:centercone}
Let $c > 3 + \frac{2}{\sin (\pi /k)}$ and
let $\ell \in \{1, \dots, \lfloor k/2 \rfloor\}$.
Consider a cell $\sigma \in \Q_i$ and a
cone $C \in \C$.
Fix two points $q,s \in \sigma$.
Every cell $\tau \in \Q_i$ with
$d(\sigma,\tau) \geq (c-2)2^i$ that intersects the cone $C^\ell_q$ is contained
in the cone $C_s^{2\ell}$.
In particular, any point $p \in C_q^\ell$
with $|pq| \geq (c-2)2^i$ lies in a cell that is fully contained in $C_s^{2\ell}$.
\end{lemma}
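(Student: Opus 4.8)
The plan is to reduce cone containment to a bound on angular deviation and then to control that deviation using the separation hypothesis on $c$. I fix the common middle-axis direction of $C$ (both $C_q^\ell$ and $C_s^{2\ell}$ inherit it) and measure every direction as a signed angle from it. Since $\ell \le \lfloor k/2\rfloor$, the cone $C_s^{2\ell}$ has half opening angle $2\ell\pi/k \le \pi$, so all angles that occur stay in a range where the triangle inequality for directed angles is valid and no wraparound can happen. Membership $x \in C_s^{2\ell}$ is then equivalent to $|\angle(\text{axis}, x-s)| \le 2\ell\pi/k$, and it suffices to establish this for every point $x \in \tau$.

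First I would pick a witness $y \in \tau \cap C_q^\ell$, which exists because $\tau$ meets $C_q^\ell$; by definition $|\angle(\text{axis}, y - q)| \le \ell\pi/k$. For an arbitrary $x \in \tau$ I then expand
\[
|\angle(\text{axis}, x-s)| \le |\angle(\text{axis}, y-q)| + |\angle(y-q,\, x-q)| + |\angle(x-q,\, x-s)|
\]
and bound the three terms separately. The first is $\le \ell\pi/k$. The second is the angle subtended at $q$ by the two points $x,y$ of $\tau$: since $q \in \sigma$ and $d(\sigma,\tau)\ge (c-2)2^i$, both $x,y$ lie at distance $\ge (c-2)2^i$ from $q$, while $|xy| \le \diam(\tau) = 2^i$, so this term is at most $2\arcsin\!\big(\tfrac{1}{2(c-2)}\big)$. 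The third is the angle between the directions $x-q$ and $x-s$, i.e.\ $\angle qxs$; using $|qs| \le \diam(\sigma) = 2^i$ and $|qx| \ge (c-2)2^i$, the law of sines gives $\sin\angle qxs \le |qs|/|qx| \le 1/(c-2)$, and (the angle being clearly acute) this term is at most $\arcsin\!\big(\tfrac{1}{c-2}\big)$.

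It remains to show the two error terms sum to at most $\ell\pi/k$; the worst case is $\ell=1$. Writing $a = 1/(c-2)$, convexity of $\arcsin$ on $[0,1]$ gives $\arcsin(a/2) \le \tfrac12\arcsin(a)$, so the total error is at most $2\arcsin(a)$. The hypothesis $c > 3 + 2/\sin(\pi/k)$ yields $c-2 > 2/\sin(\pi/k)$, i.e.\ $a < \tfrac12\sin(\pi/k) = \sin(\tfrac{\pi}{2k})\cos(\tfrac{\pi}{2k}) \le \sin(\tfrac{\pi}{2k})$, whence $\arcsin(a) < \pi/(2k)$ and the error is $< \pi/k \le \ell\pi/k$. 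Adding the first term gives $|\angle(\text{axis}, x-s)| < 2\ell\pi/k$, so $x \in C_s^{2\ell}$; as $x \in \tau$ was arbitrary, $\tau \subseteq C_s^{2\ell}$. The final ``in particular'' follows by applying this to the cell of $\Q_i$ containing $p$, which meets $C_q^\ell$ at $p$; note that for the single point $p$ only the apex-shift term is needed, so $p$ itself lands in $C_s^{2\ell}$ with generous slack, and the large constant $c$ leaves room for its containing cell as well.

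The main obstacle will be the bookkeeping of the angular errors: isolating the apex-shift term and the cell-diameter term as the only two sources of slack, and verifying that their sum fits into the extra half-angle $\ell\pi/k$ purchased by doubling the cone. The step that makes this clean is translating the hypothesis on $c$ into the trigonometric inequality $a < \sin(\pi/(2k))$ via the half-angle identity $\tfrac12\sin(\pi/k) = \sin(\tfrac{\pi}{2k})\cos(\tfrac{\pi}{2k})$, combined with the convexity bound $2\arcsin(a/2)\le\arcsin(a)$ that collapses the two error terms into a single $2\arcsin(a)$.
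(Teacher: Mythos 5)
Your proof is correct, but it takes a genuinely different route from the paper's. The paper argues metrically rather than angularly: it picks a witness $x \in \tau \cap C_q^\ell$, covers $\tau$ by the disk $D = D(x,2^i)$, and shows $D \subseteq C_s^{2\ell}$ by a buffer argument --- if a boundary ray of the intermediate cone $C_s^\ell$ meets $\partial D$ at a point $y$, then $|ys| \geq |xs| - |xy| \geq (c-3)2^i$, so the orthogonal distance from $y$ to the corresponding boundary ray of $C_s^{2\ell}$ is at least $(c-3)2^i\sin(\pi\ell/k) \geq 2\cdot 2^i$, i.e.\ at least the diameter of $D$, so $D$ cannot escape the doubled cone. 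That argument avoids inverse trigonometry entirely but consumes the hypothesis exactly (it needs $(c-3)\sin(\pi/k) \geq 2$). Your three-term angular decomposition (axis-to-witness angle, angle at $q$ between the witness and $x$ bounded via the law of cosines by $2\arcsin\bigl(\tfrac{1}{2(c-2)}\bigr)$, and the apex-shift angle $\angle qxs$ bounded via the law of sines), combined with the convexity step $2\arcsin(a/2) \leq \arcsin(a)$, needs only $c-2 > 2/\sin(\pi/k)$ for the cell-to-cell claim, so it carries a full unit of slack. That slack is precisely what makes the ``in particular'' clause work: there the cell containing $p$ is only guaranteed to keep distance $(c-3)2^i$ from $q$ (not $(c-2)2^i$), and rerunning your bounds with $c-3$ in place of $c-2$ uses the hypothesis $c-3 > 2/\sin(\pi/k)$ exactly. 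You should say this explicitly rather than the vague ``the large constant $c$ leaves room for its containing cell as well''; as written, that sentence is the one soft spot in your argument, since the main claim cannot be invoked as a black box (its hypothesis $\dist(\sigma,\tau) \geq (c-2)2^i$ need not hold for $p$'s cell). It is worth noting that your method handles this clause more gracefully than the paper's: applied verbatim to it, the paper's buffer argument would require $c \geq 4 + 2/\sin(\pi/k)$, which is harmless there only because every actual invocation of the lemma satisfies the cell-distance hypothesis of the main claim.
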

\begin{proof}
Let $x$ be a point in $\tau \cap C^{\ell}_q$.
By assumption, $|xq| \geq (c-2)2^i$.
Let $D = D(x, 2^i)$ be the disk with center $x$ and radius $2^i$.
Then, $\tau \subseteq D$. We  show that $C_s^{2\ell}$ contains $D$
and thus $\tau$.
Since $\sigma$ has diameter $2^i$, and  $C_q^\ell$ contains $x$,
the translated copy $C_s^\ell$ must
 intersect $D$. If $D \subset C_s^\ell$, we are done.
Otherwise, there is a boundary ray $\rho$ of $C_s^\ell$ that  intersects the
boundary of $D$. Let $y$ be the first intersection of $\rho$ with the boundary of $D$.
See Figure~\ref{fig:cone-imprecision}.

Since $s \in \sigma$ and $x \in \tau$,
the triangle inequality gives that $|ys| \geq |xs| - |xy| \geq  (c-3)2^i$.
Let $\rho'$ be the boundary ray of $C_s^{2\ell}$ corresponding to $\rho$ and let
$y'$
be the orthogonal projection of $y$ onto $\rho'$. Since
$|ys| \geq (c-3)2^i$ and since the angle between $\rho$ and $\rho'$ is
$\pi \ell/k$, we get that $|yy'| \geq (c-3)2^i\sin( \pi \ell/k)$.
It follows that $|yy'| \geq 2\cdot
2^i$
for $c > 3 + \frac{2}{\sin( \pi \ell/k)}$.
This holds for any  $\ell \in \{1, \dots, \lfloor k/2 \rfloor\}$ if
$c \geq 3 + \frac{2}{\sin (\pi/k)}$.
Thus, $\tau \subset D \subset C_s^{2\ell}$.
\end{proof}
\begin{figure}[htbp]
 \centering
 \includegraphics[scale=0.6]{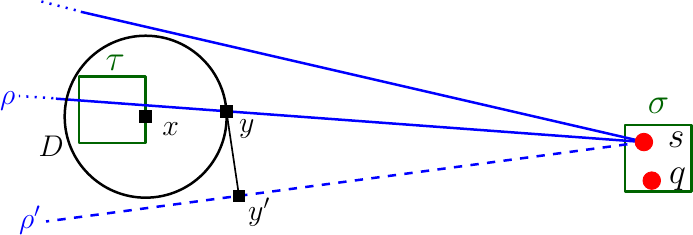}
 \caption{The boundary ray $\rho$ of $C_s^\ell$ intersects the boundary of $D$ in
$y$.}
 \label{fig:cone-imprecision}
\end{figure}

Let $p$ be a site in $C_q$ such that $pq$ is an edge of $G$, and
$p \in \tau \in N(\sigma)$ where $\sigma$ is a cell containing $q$.
Then by Lemma~\ref{lem:centercone}, $\tau$ is contained in $C_\sigma^2$.
It follows that Algorithm~\ref{alg:edgeselection} either finds an edge $rq$ before processing
$\sigma$, or finds an edge $rq$ with $r\in \tau$ while processing $\sigma$.
By applying Lemma~\ref{lem:centercone} again we get that $r \in C_q^4$.
This fact is described in greater detail and is being used in the proof of Lemma \ref{lem:shorteredge} below.

\begin{lemma}\label{lem:annulusDiameter}
Let $C \in \C$, and let $q \in \R^2$.
Suppose there are two points $p,r \in C_q^4$
with $(c-2)2^i \le |pq| \le |rq| \le (c+1)2^i$.
Then $|pr| \leq ((8\pi/k)(c+1)+3)2^i$.
\end{lemma}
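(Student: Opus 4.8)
The plan is to split the distance $|pr|$ into an angular component and a radial component, exploiting the fact that both $p$ and $r$ lie in the cone $C_q^4$, which has apex $q$ and opening angle $4 \cdot (2\pi/k) = 8\pi/k$. First I would place $q$ at the origin, so that $p$ and $r$ are described by their distances $a = |pq|$ and $b = |rq|$ from $q$ together with the angle $\theta = \angle pqr$ between the rays $qp$ and $qr$. Since both points lie inside a cone of opening angle $8\pi/k$ with apex $q$, we have $\theta \le 8\pi/k$, and by hypothesis $(c-2)2^i \le a \le b \le (c+1)2^i$.

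Next I would introduce the auxiliary point $p'$, namely the point on the ray $qr$ at distance $a$ from $q$. By the triangle inequality, $|pr| \le |pp'| + |p'r|$, and I would bound the two terms separately. The points $p$ and $p'$ both lie on the circle of radius $a$ centered at $q$ and are separated by the angle $\theta$, so $|pp'| = 2a\sin(\theta/2) \le a\theta$, using $\sin x \le x$. With $a \le (c+1)2^i$ and $\theta \le 8\pi/k$, this yields the angular bound $|pp'| \le (8\pi/k)(c+1)2^i$. For the radial term, $p'$ and $r$ lie on the same ray emanating from $q$, so $|p'r| = b - a \le (c+1)2^i - (c-2)2^i = 3 \cdot 2^i$.

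Combining the two estimates gives $|pr| \le ((8\pi/k)(c+1) + 3)2^i$, as claimed. There is no genuine obstacle here beyond the routine trigonometric estimate; the only points requiring a little care are the reduction to the chord-length inequality $2a\sin(\theta/2) \le a\theta$ and correctly accounting for the cone's opening angle (the factor-$4$ expansion of the base opening angle $2\pi/k$ that yields $8\pi/k$). Everything else follows directly from the triangle inequality and the given distance bounds.
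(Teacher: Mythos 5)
Your proof is correct and follows essentially the same route as the paper: the paper also bounds $|pr|$ by traveling along the circle of radius $|pq|$ around $q$ (the angular part, at most $(8\pi/k)(c+1)2^i$) and then radially out to $r$ (at most $3\cdot 2^i$), which is exactly your decomposition via the auxiliary point $p'$. Your version merely makes the arc-versus-chord step explicit with the estimate $2a\sin(\theta/2)\le a\theta$, which the paper leaves implicit.
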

\begin{proof}
The points $p$ and $r$ lie in an annulus around $q$ with inner
radius $(c-2)2^i$ and outer radius $(c + 1)2^i$.
Since $p,r \in C_q^4$, when going from $p$ to $r$, we must travel at most
$(8\pi/k) (c+1)2^i$ units along the circle around $q$ with
$p$ on the boundary, then at most $3 \cdot 2^i$ units
radially towards $r$. Thus, $|pr| \leq (8\pi/k) (c+1)2^i + 3\cdot 2^i$.
\end{proof}

\begin{figure}[htb]
\centering
\begin{subfigure}[t]{0.45\textwidth}
\centering
\includegraphics[scale=0.7]{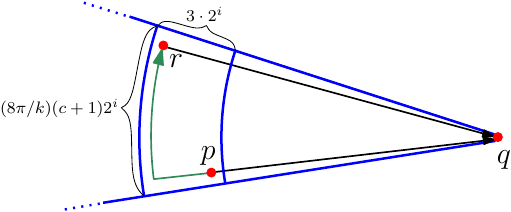}
  \caption{Lemma~\ref{lem:annulusDiameter}. Two sites in an annulus are close to each other.}
\end{subfigure}
\begin{subfigure}[t]{0.45\textwidth}
\centering
\includegraphics[scale=0.7]{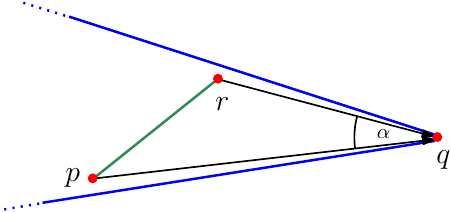}
  \caption{Lemma~\ref{lem:lengthofchosenedge}. If $\alpha$ is small
and $|rq| \leq |pq|$, then
$|pr| < |pq|$.}
\end{subfigure}
\end{figure}

\begin{lemma}[Lemma~10 in \cite{BoseEtAl12}]
\label{lem:lengthofchosenedge}
Let $k \geq 25$ be large enough such that
\[
 \frac{1 + \sqrt{2 - 2\cos (8\pi / k)}}{2\cos (8\pi/k) -1}
  = 1 + \Theta(1/k) \leq t
\]
for our desired stretch factor $t$.
For any three
distinct points $p,q,r \in \R^2$ such that
$|rq| \leq |pq|$ and $\alpha = \angle pqr$
is between $0$ and $8\pi /k$, we have
 $|pr| \leq |pq| - |rq|/t$.
\end{lemma}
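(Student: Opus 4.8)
The plan is to reduce the statement to a single algebraic inequality in $t$ and the angle bound $\beta = 8\pi/k$, and then to recognize the stated threshold as the larger root of a quadratic. Write $a = |pq|$ and $b = |rq|$, so that the hypotheses read $0 < b \le a$ and $\alpha = \angle pqr \in [0,\beta]$. By the law of cosines, $|pr|^2 = a^2 + b^2 - 2ab\cos\alpha$. Since $b \le a$ and $t > 1$, the target bound $|pr| \le a - b/t$ has a nonnegative right-hand side, so it is equivalent to the squared inequality $a^2 + b^2 - 2ab\cos\alpha \le (a - b/t)^2$. Expanding, cancelling the $a^2$ terms, and dividing by $b > 0$, this simplifies to
\[
 b\left(1 - \tfrac{1}{t^2}\right) \le 2a\left(\cos\alpha - \tfrac{1}{t}\right).
\]
So it suffices to establish this last inequality.

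First I would remove the dependence on $\alpha$ and $b$ by monotonicity. The left-hand side is increasing in $b$, so since $b \le a$ it is at most $a(1 - 1/t^2)$; the right-hand side is decreasing in $\alpha$ (as $\cos$ is decreasing on $[0,\pi/3]$, and $\beta < \pi/3$ for $k \ge 25$), so using $\alpha \le \beta$ it is at least $2a(\cos\beta - 1/t)$. Both relaxations are simultaneously attained at the isosceles extreme $b = a$, $\alpha = \beta$, so nothing is lost. It therefore suffices to prove the single-variable inequality $a(1 - 1/t^2) \le 2a(\cos\beta - 1/t)$, i.e., after dividing by $a$ and multiplying by $t^2$,
\[
 (2\cos\beta - 1)\,t^2 - 2t + 1 \ge 0.
\]

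Finally I would solve this quadratic in $t$. The key point, and the place where the hypothesis $k \ge 25$ enters, is that $\beta = 8\pi/k \le 8\pi/25 < \pi/3$ forces $\cos\beta > 1/2$, so the leading coefficient $2\cos\beta - 1$ is strictly positive and the parabola opens upward. Its roots are $(1 \pm \sqrt{2 - 2\cos\beta})/(2\cos\beta - 1)$, so the inequality holds precisely when $t$ is at least the larger root $(1 + \sqrt{2 - 2\cos\beta})/(2\cos\beta - 1)$, which is exactly the threshold assumed in the statement. This closes the chain of equivalences and proves $|pr| \le |pq| - |rq|/t$. The estimate $1 + \Theta(1/k)$ for this threshold follows from $\sqrt{2 - 2\cos\beta} = 2\sin(\beta/2)$ together with a Taylor expansion as $\beta \to 0$.

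The proof is essentially a sequence of routine manipulations, so there is no single hard obstacle; the main things to get right are the directions of the two monotonicity relaxations and the observation that they are jointly tight, together with the sign condition $2\cos\beta - 1 > 0$ that makes the quadratic threshold well-defined in the first place. I would also dispatch the degenerate case $\alpha = 0$ (collinear $p, q, r$) separately, where the bound reduces to $a - b \le a - b/t$, which is immediate from $t > 1$.
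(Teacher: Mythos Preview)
Your proof is correct and follows essentially the same route as the paper's: both apply the law of cosines, compare $|pr|^2$ against $(|pq|-|rq|/t)^2$, and reduce to the quadratic inequality $(2\cos\beta-1)t^2-2t+1\ge 0$, whose larger root is exactly the stated threshold. The only organizational difference is that the paper first completes the square to isolate $(|pq|-|rq|/t)^2$ and then bounds the remainder (which requires separately checking $t\cos\beta\ge 1$ to justify replacing $|pq|/|rq|$ by $1$), whereas your monotonicity-in-$b$-and-$\alpha$ reduction reaches the same quadratic more directly and avoids that side condition; this is a mild streamlining but not a genuinely different argument.
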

\begin{proof}
By the law of cosines and since $0\le \alpha \le 8k/\pi$ we have that
\begin{align*}
|pr|^2 &= |pq|^2 + |rq|^2 -2|pq|\cdot|rq|\cos \alpha
\leq
|pq|^2 + |rq|^2 -2|pq|\cdot|rq|\cos (8\pi/k)
\intertext{Introducing $t$ by adding and subtracting equal terms,  this is}
&=
|pq|^2 - \frac{2}{t}\,|pq|\cdot|rq| + \frac{1}{t^2}|rq|^2  +
\frac{t^2 - 1}{t^2}|rq|^2 - \frac{2(t\cos (8\pi/k)- 1)}{t}|pq|\cdot|rq| \\
&=
\Bigl(|pq| - \frac{|rq|}{t}\Bigr)^2  +
\frac{t^2 - 1}{t^2}|rq|^2 - \frac{2(t\cos (8\pi/k)- 1)}{t}|pq|\cdot|rq|.
\end{align*}
We complete the proof by showing that under the assumptions of the lemma
$\frac{t^2 - 1}{t^2}|rq|^2 - \frac{2(t\cos (8\pi/k)- 1)}{t}|pq|\cdot|rq| \le 0$.
We have that
\begin{align*}
\frac{t^2 - 1}{t^2}|rq|^2 - \frac{2(t\cos (8\pi/k)- 1)}{t}|pq|\cdot|rq|
&=\frac{|rq|^2}{t^2}\Bigl(t^2 - 1 - 2(t^2\cos (8\pi/k) - t)\frac{|pq|}{|rq|}\Bigr)\\
&\leq
\frac{|rq|^2}{t^2}\Bigl(t^2 - 1 - 2(t^2\cos (8\pi/k) - t)\Bigr),
\end{align*}
where the last inequality follows
since $|pq| \geq |rq|$ and
\[
t\ge  \frac{1 + \sqrt{2 - 2\cos (8\pi / k)}}{2\cos (8\pi/k) -1} \ge \frac{1}{2\cos (8\pi/k) -1} \ge \frac{1}{\cos (8\pi/k)} \ ,
\]
so
 $t\cos(8\pi/k) \geq 1$.
Now we have that
$$t^2 - 1 - 2(t^2\cos (8\pi/k) - t)
= (1-2\cos (8\pi/k))t^2 + 2t - 1 \le 0 $$
if
 $\cos(8\pi/k) > 1/2$ and
\[
 \frac{1 + \sqrt{2 - 2\cos (8\pi / k)}}{2\cos (8\pi/k) -1} \leq t \ .
\]
The latter inequality holds by assumption and $\cos(8 \pi/k) > 1/2$ for $k \geq 25$.
\end{proof}

We are now ready to bound the stretch of the spanner $H$. This is done in two
steps.
In the first step (Lemma \ref{lem:shorteredge}) we prove that for any edge $pq$ of $G$
which is not
 in
$H$, there exists a shorter edge $rq$ in $H$,
such that $r$ is ``close'' to $p$.
 This fact allows us to prove, via a fairly standard
inductive argument, that $H$ is indeed a spanner of $G$.

\begin{lemma}
\label{lem:shorteredge}
Let $c$ and $k$ be such that $c > 3 + \frac{2}{\sin (\pi /k)}$ as
required by
 Lemma \ref{lem:centercone}, $k$ satisfies the
conditions of Lemma \ref{lem:lengthofchosenedge} and, in addition, $c \geq 2 + \frac{2t}{t-1}$ and $k \ge \frac{16\pi t}{t-1}$.
Let $pq$ be an edge of $G$. Then either $pq$ is in $H$ or there is an edge $rq$ in $H$ such that
$|pr| \leq |pq| - |rq|/t$.
\end{lemma}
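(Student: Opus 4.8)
The plan is to pin down, for the given edge $pq$, the cone and grid cell at which Algorithm~\ref{alg:edgeselection} acts, and then to compare the edge it retains with $pq$ using the three preceding lemmas. Let $C\in\C$ be the cone with $p\in C_q$. By Definition~\ref{def:decomposition}(ii) there is a pair $(\sigma_0,\tau_0)\in N$ with $q\in\sigma_0$, $p\in\tau_0$, and $p\in R_{\tau_0}$ or $q\in D(m_{\tau_0})$. Since $p\in C_q$ and $|pq|\ge(c-2)\diam(\sigma_0)$ by Definition~\ref{def:decomposition}(i), Lemma~\ref{lem:centercone} (with $\ell=1$) gives $\tau_0\subseteq C_{\sigma_0}^2$, so while $q$ is still active the algorithm must select an edge once it processes $\sigma_0$. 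Hence $q$ is deactivated at some cell $\sigma^{*}$ with $\diam(\sigma^{*})\le\diam(\sigma_0)$, selecting at least one edge $rq$ with $r\in\tau\in N(\sigma^{*})$ and $\tau\subseteq C^2_{\sigma^{*}}$. Applying Lemma~\ref{lem:centercone} a second time (with $\ell=2$, to the centre $s^{*}$ of $\sigma^{*}$ and to $q$) yields $\tau\subseteq C_q^4$, so every selected $r$, and $p$ as well, lies in $C_q^4$; in particular $\angle pqr\le 8\pi/k$, the opening angle of $C_q^4$.

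If the selected edge is $pq$ itself, then $pq\in H$ and there is nothing to prove. Otherwise I would fix a convenient selected edge $rq$ and split on the comparison of $|rq|$ with $|pq|$. If $|rq|\le|pq|$, then $p,r\in C_q^4$ with $\angle pqr\le 8\pi/k$ place us precisely in the setting of Lemma~\ref{lem:lengthofchosenedge}, which gives $|pr|\le|pq|-|rq|/t$ immediately. This covers the generic situation, and in particular (for $c\ge5$) the case in which $\sigma^{*}$ lies two or more levels below $\sigma_0$: there $\diam(\sigma^{*})\le\diam(\sigma_0)/4$, so $|rq|<(2c+2)\diam(\sigma^{*})\le\tfrac{c+1}{2}\diam(\sigma_0)\le(c-2)\diam(\sigma_0)\le|pq|$, forcing $|rq|\le|pq|$.

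The remaining, delicate case is $|rq|>|pq|$, where the retained edge is longer than the one we are trying to represent. Here the separation bounds of Definition~\ref{def:decomposition}(i) together with $\diam(\sigma^{*})\le\diam(\sigma_0)$ force $\diam(\sigma_0)/\diam(\sigma^{*})<2(c+1)/(c-2)<4$, so the two diameters differ by at most a factor of two. When $\sigma^{*}=\sigma_0$ I would instead retain the edge the algorithm takes out of $\tau_0$ itself, so that $p$ and $r$ share the cell $\tau_0$; then $|pr|\le\diam(\tau_0)$ and the reverse triangle inequality $|rq|\le|pq|+|pr|$, combined with $c\ge 2+\tfrac{2t}{t-1}$, give $|pq|-|rq|/t\ge\diam(\tau_0)\ge|pr|$. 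When the levels differ by exactly one, $|rq|<(2c+2)\diam(\sigma^{*})=(c+1)\diam(\sigma_0)$ and $|pq|\ge(c-2)\diam(\sigma_0)$, so $p$ and $r$ lie in a common thin annulus $[(c-2)2^{m},(c+1)2^{m}]$ about $q$ with $2^{m}=\diam(\sigma_0)$, inside $C_q^4$; I would bound $|pr|$ by Lemma~\ref{lem:annulusDiameter} and feed in the hypotheses $c\ge 2+\tfrac{2t}{t-1}$ and $k\ge\tfrac{16\pi t}{t-1}$, which are tailored so that $((8\pi/k)(c+1)+3)2^{m}\le|pq|-|rq|/t$.

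I expect the main obstacle to be exactly this $|rq|>|pq|$ case. The difficulty is twofold: one must guarantee that the edge kept in $H$ is genuinely \emph{close} to $p$ — which is where the freedom to choose among the edges selected at $\sigma^{*}$, and the switch to an edge out of $\tau_0$ when $\sigma^{*}=\sigma_0$, become essential — and one must then carry out the trigonometric estimate carefully enough that the stated constants $c$ and $k$ suffice. The other two cases reduce to direct invocations of Lemma~\ref{lem:lengthofchosenedge} and Lemma~\ref{lem:centercone} and should be routine.
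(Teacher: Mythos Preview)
Your proposal is correct and follows essentially the same approach as the paper: locate the pair $(\sigma_0,\tau_0)$ from Definition~\ref{def:decomposition}(ii), use Lemma~\ref{lem:centercone} twice to place both $p$ and the selected $r$ in $C_q^4$, and then dispatch the three situations (same cell via the diameter bound, annulus via Lemma~\ref{lem:annulusDiameter}, shorter edge via Lemma~\ref{lem:lengthofchosenedge}) exactly as the paper does. The only difference is cosmetic: you split first on $|rq|$ versus $|pq|$ and then on the level of $\sigma^{*}$, whereas the paper splits first on whether $q$ is deactivated at the level of $(\sigma,\tau)$ or earlier and then on $|rq|$ versus $|pq|$; both orderings lead to the same three computations with the same constants.
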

\begin{proof}
Let $N$ be the neighborhood relation of the
$c$\emph{-separated annulus decomposition} used by Algorithm~\ref{alg:edgeselection}.
Let $(\sigma, \tau) \in N$ be a pair of neighboring
cells satisfying  requirement (ii) of Definition~\ref{def:decomposition} with respect to
$pq$.
In particular we have that $q \in \sigma$ and $p \in \tau$.
If there is more than one such pair $(\sigma, \tau) \in N$, we consider the pair with minimum diameter.
Let $\diam(\sigma) = 2^i$, that is
 $\sigma,\tau \in \Q_i$.

Let $C \in \C$ be the cone such that
$p \in C_q$.
Since $p \in C_q \cap \tau$ and since $d(\sigma,\tau) \geq (c-2)2^i$,
Lemma~\ref{lem:centercone} implies that $\tau \subset C_\sigma^2$. Hence,
$\tau$ is considered for incoming edges for $q$
(line~\ref{line:conetest} in Algorithm~\ref{alg:edgeselection}).
We split the rest of the proof into two cases.

\noindent
\textbf{Case 1:} $q$ remains active until $(\sigma, \tau)$
is considered.
Requirement (ii) of
Definition~\ref{def:decomposition} guarantees that
Algorithm~\ref{alg:edgeselection} finds
an incoming edge $rq$ for $q$
with $r \in \tau$.
If $r = p$, we are done,  so
suppose that $r \neq p$.
Since $\diam(\sigma) = 2^i$ and $|rq| \geq d(\sigma,\tau) \geq (c-2)2^i$
we have
\begin{align*}
  |pr| & \leq 2^i  = |pq| - (|pq| - 2^i) \leq |pq| - (|rq| - 2\cdot 2^i) \\
  & \leq |pq| - (|rq| - 2|rq|/(c-2)) \leq |pq| - |rq|(1-2/(c-2)) \leq |pq|
-|rq|/t,
\end{align*}
for $c \geq 2 + \frac{2t}{t-1}$.

\begin{figure}[htb]
\centering
\begin{subfigure}[t]{0.45\textwidth}
\centering
\includegraphics[scale=0.8]{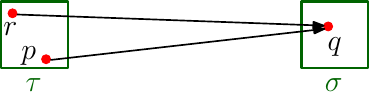}
  \caption{Case 1: $p$ and $r$ are in the
same cell $\sigma$.}
\end{subfigure}
\begin{subfigure}[t]{0.45\textwidth}
\centering
\includegraphics[scale=0.8]{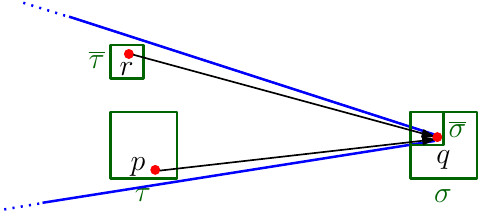}
  \caption{Case 2: $p$ and $r$ are in different cells with different levels
but in the same cone $C_q^4$.}
\end{subfigure}
\end{figure}

\noindent
\textbf{Case 2:} $q$ becomes inactive before $(\sigma, \tau)$
is considered.
Then Algorithm~\ref{alg:edgeselection} has selected an edge $rq$
while considering
a pair $(\bar{\sigma},\bar{\tau}) \in N$ with $q \in \bar{\sigma}$,
$r \in \bar{\tau}$ and $\diam(\bar{\sigma}) \leq 2^{i-1}$.
We now distinguish two subcases.

\textbf{Subcase 2a}  $|rq| \geq |pq|$.
From Property (i) of Definition~\ref{def:decomposition}, it follows that
$d(\sigma,\tau) \ge (c-2)2^i$ and therefore $|pq| \geq (c-2)2^i$.
It also follows from the same property that
$d(\bar{\sigma},\bar{\tau}) \le 2c2^{i-1}$, so
$|rq| \le 2c2^{i-1} + 2\cdot 2^{i-1} = (c+1)2^i$.
Combining these inequalities we obtain that
$(c-2)2^i \le |pq| \le |rq| \le   (c+1)2^i $
and therefore
$|pq| \geq |rq| - 3\cdot 2^i$.
Lemma~\ref{lem:annulusDiameter}
implies that
$|pr| \leq ((8\pi/k)(c+1)+3) 2^i$, and thus we have
\begin{align*}
 |pr| &\leq ((8\pi/k)(c+1)+3) 2^i \\
  &\leq  |pq| - |pq| + ((8\pi/k)(c+1)+3) 2^i \\
  &\leq |pq| - \bigl(|rq| - 3\cdot2^i - ((8\pi/k)(c+1)+3) 2^i\bigr) \\
  &\leq |pq| - \Bigl(|rq| - \frac{(8\pi(c+1)-6)2^i}{k}\Bigr)\\
  &\leq |pq| - |rq|\Bigl(1 - \frac{(8\pi(c+1)-6)}{k(c-2)}\Bigr) \\
 &\leq |pq| - |rq|\Bigl(1 - \frac{16\pi}{k}\Bigr) \ .
\end{align*}
The  third inequality follows since
$|pq| \geq |rq| - 3\cdot 2^i$ as we argued above, and  the
fifth inequality follows since
$2^i \leq |rq|/(c-2)$. The last inequality holds for $c \ge 5$ (which follows from our assumptions).
Now we clearly have that
\[
|pq| - |rq|\Bigl(1 - \frac{16\pi}{k}\Bigr)
\leq |pq|-
|rq|/t,
\]
for $k \ge \frac{16\pi t}{t-1}$.

\textbf{Subcase 2b}  $|rq| <|pq|$.
By assumption, we have $p \in C_{q} \subset C_q^4$.
Furthermore, by applying Lemma~\ref{lem:centercone} with
 the midpoint of $\bar{\sigma}$ as $q$, $r$ as $p$, and $q$ as $s$,
in the statement of the lemma, we get that
 $r \in C_q^4$.
Since $p,r \in C_q^4$ and since the opening angle of  $C_q^4$ is $8\pi/k$, it follows from
Lemma~\ref{lem:lengthofchosenedge} that
$|pr| \leq |pq| - |rq|/t$.
\end{proof}

\begin{lemma}
 \label{lem:edgeapproximation}
For any $t > 1$, there are constants
$c$ and $k$ such that $H$ is a $t$-spanner for the transmission graph
$G$.
\end{lemma}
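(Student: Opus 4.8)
The plan is to reduce the global spanner guarantee to a statement about single edges and then to close that statement by induction on edge length, with Lemma~\ref{lem:shorteredge} serving as the inductive step. First I would fix the constants. For any $t>1$, I choose $k$ large enough (as a function of $t$) so that $k\ge 25$, the stretch inequality of Lemma~\ref{lem:lengthofchosenedge} holds, and $k\ge \frac{16\pi t}{t-1}$; then I choose $c$ large enough so that $c>3+\frac{2}{\sin(\pi/k)}$ and $c\ge 2+\frac{2t}{t-1}$. All of these are finite bounds depending only on $t$, so such a choice exists, and it makes every hypothesis of Lemma~\ref{lem:shorteredge} available.

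The core claim I would prove is that for every edge $pq$ of $G$ we have $d_H(p,q)\le t\,|pq|$. I argue this by strong induction on the rank of $|pq|$ in the sorted list of Euclidean edge lengths of $G$. If $pq\in H$, then $d_H(p,q)\le |pq|\le t|pq|$ and there is nothing to do. Otherwise Lemma~\ref{lem:shorteredge} supplies an edge $rq\in H$ with $|pr|\le |pq|-|rq|/t$. I would then apply the inductive hypothesis to the edge $pr$ and conclude via $d_H(p,q)\le d_H(p,r)+|rq|\le t|pr|+|rq|\le t\bigl(|pq|-|rq|/t\bigr)+|rq|=t|pq|$, where $d_H(r,q)\le|rq|$ holds because $rq$ is a single edge of $H$.

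The step that actually makes this induction legitimate --- and the one I would be most careful about --- is justifying that $pr$ is itself an edge of $G$ of strictly smaller length, so that the inductive hypothesis really applies. Here $|rq|>0$ (as $rq$ is an edge) forces $|pr|<|pq|$, which both rules out $p=r$ (otherwise $pq=rq\in H$, contradicting $pq\notin H$) and strictly lowers the rank. Moreover, since $pq$ is an edge we have $q\in D(p)$, i.e.\ $|pq|\le r_p$, and hence $|pr|<|pq|\le r_p$, so $r\in D(p)$ and $pr\in G$. This is the crucial observation: the replacement point $r$ lies \emph{inside} $D(p)$, not merely close to $p$, which is exactly what a directed transmission graph requires for $p$ to still be able to reach it.

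Finally, to pass from edges to arbitrary pairs I take a shortest $p$-to-$q$ path $p=v_0,\dots,v_m=q$ in $G$ (the bound being vacuous when $q$ is unreachable), apply the per-edge claim to each $v_{i-1}v_i$, and sum: $d_H(p,q)\le \sum_i d_H(v_{i-1},v_i)\le t\sum_i |v_{i-1}v_i|=t\,d_G(p,q)$. Together with the $O(n)$ bound on the number of edges of $H$ established before the algorithm, this shows that $H$ is a $t$-spanner, completing the proof.
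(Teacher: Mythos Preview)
Your proof is correct and follows essentially the same approach as the paper: fix $c$ and $k$ so that Lemma~\ref{lem:shorteredge} applies, then induct on edge length using that lemma to produce a shorter edge $pr\in G$ (via $|pr|<|pq|\le r_p$) and telescope. The only cosmetic differences are that the paper separates out the base case explicitly (arguing the shortest edge of $G$ must lie in $H$), whereas you absorb it into strong induction, and you spell out the final passage from per-edge stretch to arbitrary pairs via a shortest $G$-path, which the paper leaves implicit.
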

\begin{proof}
We pick the constants $c$ and $k$ so that Lemma \ref{lem:shorteredge} holds.
We prove
 by induction on the indices of edges when ordered by their lengths,
 that for each  edge $pq$ of $G$,
there is a path from  $p$ to $q$ in $H$
of length at most $t|pq|$.
For the base case, consider the shortest edge
$pq$ in $G$.
By Lemma~\ref{lem:shorteredge},
if $pq$ is not in $H$ then  there is an edge $rq$ in $H$  such that
 $|pr| \leq |pq| - |rq|/t$. Since $pq$ is an edge of $G$, it follows that $r_p \geq |pq|$
and therefore $pr$ must  also be an edge of $G$, and it is shorter than $pq$. This gives   a contradiction and therefore
 $pq$ must be  in $H$.

For the induction step, consider an edge
$pq$ of $G$. If $pq$ is in $H$ we are done. Otherwise
by Lemma~\ref{lem:shorteredge} there is an edge $rq$ in $H$ such that
 $|pr| \leq |pq| - |rq|/t$. As argued above, $pr$ is an edge of $G$ shorter than $|pq|$ so
by the induction hypothesis, there is a path from $p$ to $r$ in $H$ of length
no larger than $t|pr|$. It follows that
\begin{equation*}
 d_H(p,q) \leq d_H(p,r) + |rq| \leq t|pr| + |rq| \leq  t(|pq| - |rq|/t) +
|rq| \leq t|pq|,
\end{equation*}
as required.
\end{proof}

\paragraph*{Finding the Decomposition.}
We use a quadtree to define the cells of the decomposition. We recall that
a \emph{quadtree}  is a rooted tree $T$ in which each
internal node has degree four.
Each node $v$ of $T$ is associated with a cell $\sigma_v$ of some grid
$\Q_i$, $i\geq 0$, and if $v$ is an internal node,
the cells associated with its children partition $\sigma_v$ into four
congruent squares, each with diameter $\diam(\sigma_v)/2$.
If $\sigma_v$ is from $\Q_i$ then we say that $v$ is of \emph{level} $i$.
Note that all nodes of $T$ at the same distance from the root are of the same
level.

Let $c$ be the required  parameter for the annulus decomposition.
We scale $P$ such that the
closest pair in $P$ has distance $c$. (We use $P$  to denote also the scaled point set).
Let $L$ be the smallest integer such that we can translate $P$ so that it fits in a
 single cell $\sigma$ of  $\Q_L$.
Since $c$ is constant and $P$ has spread $\Phi$,
the diameter of $P$ (after scaling) is $c\Phi$ and therefore $L = O(\log \Phi)$.
We translate $P$ so that it fits in $\sigma$ and we
 associate the root $r$ of our quadtree $T$ with this cell $\sigma$, i.e.\  $\sigma_r = \sigma$.
By the definition of a level, $r$ is of level $L$.

We continue constructing $T$ top down as follows.
We construct level $i-1$ of $T$, given level $i$, by splitting
the cell $\sigma_v$ of each node $v$, whose cell $\sigma_v$ is not empty,
into four
congruent squares, and associate each of these squares with a child of $v$.
We stop the construction of $T$ after generating the cells of level $0$. The scaling which we did to
$P$
 ensures that each cell of a  leaf node at level $0$
contains at most one site.

We now set $\Q = \{ \sigma_v \mid v \in T \}$.
We define $N$ as the set of all pairs
$(\sigma_v,\sigma_w) \in \Q \times \Q$ such that $v$ and $w$
are at the same level in $T$ and
$ d(\sigma_v,\sigma_w) \in [c-2,2c) \diam(\sigma_v)$.\footnote{We
denote the interval $[a\diam(\sigma_v),b\diam(\sigma_v))$ by
$[a,b) \diam(\sigma_v)$.}
For $\sigma \in \Q$, we define $R_{\sigma}$ to be the set of all sites  $p \in \sigma \cap P$
with  $ r_p \in [c,2(c+1))\diam(\sigma_v)$.

\begin{lemma} \label{lem:3.9}
$(\Q, N, R_\sigma)$
is a $c$-separated annulus decomposition for $G$.
\end{lemma}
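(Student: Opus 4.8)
The plan is to verify the two defining properties of a $c$-separated annulus decomposition (Definition~\ref{def:decomposition}) directly from the construction. Property~(i) is immediate: by definition, $N$ contains only pairs $(\sigma_v,\sigma_w)$ whose nodes sit at the same level of $T$, so $\diam(\sigma_v)=\diam(\sigma_w)$, and the membership $d(\sigma_v,\sigma_w)\in[c-2,2c)\diam(\sigma_v)$ is exactly the condition $d(\sigma,\tau)=\gamma\diam(\sigma)$ with $\gamma\in[c-2,2c)$. So essentially all the work lies in property~(ii).

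Fix an edge $pq$ of $G$, so that $q\in D(p)$ and hence $r_p\ge |pq|$. For each level $i\in\{0,\dots,L\}$ let $\sigma_i$ and $\tau_i$ be the level-$i$ cells containing $q$ and $p$, respectively; both are non-empty and therefore belong to $\Q$. The key geometric observation is the sandwich
\[
|pq|-2\cdot 2^i \;\le\; d(\sigma_i,\tau_i)\;\le\;|pq|,
\]
which holds because any point of $\sigma_i$ and any point of $\tau_i$ are within $d(\sigma_i,\tau_i)+\diam(\sigma_i)+\diam(\tau_i)=d(\sigma_i,\tau_i)+2\cdot 2^i$ of each other, while $d(\sigma_i,\tau_i)$ is a lower bound on $|pq|$. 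I would then choose the level $i$ so that $|pq|/2^i\in[c,2c)$; since this interval has endpoints in ratio $2$, exactly one power of two lands in $(|pq|/(2c),|pq|/c]$, so such an $i$ exists and is unique. A short check shows $i$ is a legal quadtree level: the scaling makes $|pq|\ge c$, so $2^i\le |pq|/c$ forces $i\ge 0$, while $|pq|\le c\Phi\le 2^L$ together with $c>1$ forces $i< L$. Dividing the sandwich by $2^i$ gives $d(\sigma_i,\tau_i)/2^i\in[\,|pq|/2^i-2,\ |pq|/2^i\,]\subseteq[c-2,2c)$, so $(\sigma_i,\tau_i)\in N$, with $q\in\sigma_i$ and $p\in\tau_i$ as required (and since $|pq|\ge c\,2^i>2^i$, the two cells are genuinely distinct).

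It remains to establish the ``either/or'' part of property~(ii) with $\tau=\tau_i$, and here I would split on the radius of $p$. Our choice gives $r_p\ge|pq|\ge c\,2^i$. If also $r_p<2(c+1)2^i$, then $r_p\in[c,2(c+1))\diam(\tau)$, which is exactly the membership condition defining $R_\tau$, so $p\in R_\tau$ and we are done. Otherwise $r_p\ge 2(c+1)2^i$; since $m_\tau$ has the largest radius in $P\cap\tau$, we get $r_{m_\tau}\ge r_p\ge 2(c+1)2^i$. Bounding the distance from $m_\tau\in\tau$ to $q\in\sigma$ by $\diam(\tau)+d(\sigma,\tau)+\diam(\sigma)<2^i+2c\cdot 2^i+2^i=2(c+1)2^i\le r_{m_\tau}$ shows $q\in D(m_\tau)$, which completes property~(ii).

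I expect the main obstacle to be the level-selection step rather than the radius dichotomy. One must see that the admissible separation window $[c-2,2c)$ is wide enough to be hit: it has to absorb both the additive slack of $2\cdot 2^i$ between $|pq|$ and $d(\sigma_i,\tau_i)$ and the multiplicative factor $2$ between consecutive quadtree levels. Choosing $i$ via $|pq|/2^i\in[c,2c)$ is precisely what makes both slacks fit simultaneously, and the remaining bookkeeping is to confirm that this $i$ is an actual level of $T$ so that the corresponding cells $\sigma_i,\tau_i$ exist in $\Q$.
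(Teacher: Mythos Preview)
Your proof is correct and follows essentially the same route as the paper's own argument: pick the level $i$ with $|pq|\in[c,2c)2^i$, use the sandwich between $|pq|$ and $d(\sigma_i,\tau_i)$ to place the pair in $N$, and then split on whether $r_p<2(c+1)2^i$ to land in $R_\tau$ or to force $q\in D(m_\tau)$. You are somewhat more explicit than the paper in verifying that the chosen $i$ is an actual level of $T$ and in spelling out the distance bound $|m_\tau q|<2(c+1)2^i$, but these are elaborations of the same idea rather than a different approach.
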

\begin{proof}
Property (i) of
Definition~\ref{def:decomposition} follows by construction.
To prove that Property (ii) holds consider an edge
$pq$ of $G$.
Let $i$ be the integer such that  $|pq| \in [c, 2c) 2^i$.
Let $\sigma, \tau$ be the cells of $\Q_i$
with $p \in \sigma$ and $q \in \tau$.
By construction,  $\sigma$ and $\tau$ are assigned to nodes
of the quadtree and thus contained in $\Q$.
Since $\diam(\sigma) = \diam(\tau) = 2^i$, we have
\[
  (c-2)2^i \leq |pq| - 2 \diam(\sigma) \leq d(\sigma, \tau) \leq |pq| <
c2^{i+1},
\]
and therefore $(\sigma,\tau) \in N$
by our definition of $N$.
Since $pq$
is an edge of $G$, it follows that $r_p \geq |pq| \geq c2^i$.
If $r_p < (c+1)2^{i+1}$, then $p \in R_{\sigma}$.
Otherwise, $r_{m_{\sigma}} \geq r_p \ge (c+1)2^{i+1}$,
and $q \in \tau \subset D(m_{\sigma})$.
\end{proof}

\paragraph*{Computing the Edges of $H$.}
We find edges for each cone  $C \in \C$ separately as follows.
For each pair of neighboring cells $\sigma$ and $\tau\in N(\sigma)$ such that $\tau$ is contained in
$C_{\sigma}^2$ we find
all incoming edges to sites in $\sigma$ from sites in $\tau$ simultaneously.
To do this efficiently, we need to sort the sites in $\sigma$ along the
$x$ and $y$ directions. Therefore, we process the  cells
bottom-up along $T$ in order of increasing levels. This way we can obtain a sorted list of the sites in each
cell $\sigma$ by merging the sorted lists of its children.
See Algorithm~\ref{alg:efficientedgeselection}.

\begin{algorithm}[htb]
\For{$i=0,\dots, L$} {
  \ForEach{$v \in T$ \emph{of level} $i$}{
    $Q \gets $ active sites in $\sigma_v \cap P$\\ \tcp{preproccesing}
    Sort $Q$ in $x$ and $y$-direction by merging the sorted lists of the
     children of $v$
    \label{line:preproccesing}
    \ForEach{$\tau \in N(\sigma_v)$ \textnormal{contained in }
$C_{\sigma_v}^2$}{
      $R \gets R_{\tau} \cup \{m_\tau\}$\\
      \tcp{edge selection}
      For each site $q \in Q$, find
      a $r \in R$ with $q \in D(r)$,
      if it exists; add the edge $rq$ to $H$

      \label{line:edgeselection2}
    }
    Set all $q \in Q$ for which at least one
    incoming edge was found to \emph{inactive}
  }
}
\caption{Selecting the edges for $H$ for a fixed cone $C$.}
\label{alg:efficientedgeselection}
\end{algorithm}

Note that the edges selected by
Algorithm~\ref{alg:efficientedgeselection} have the same
properties as the edges selected by
Algorithm~\ref{alg:edgeselection}.
Thus, by Lemma~\ref{lem:edgeapproximation}, the
resulting graph is a $t$-spanner.
Let $Q$ be the set of active sites in $\sigma_v$ when processing $v$.
Let $\tau \in N(\sigma_v)$ such that $\tau$ is contained in $C_{\sigma_v}^2$
and let $R = R_{\tau} \cup \{m_{\tau}\}$. Assume  $|Q|=n$ and $|R|=m$. To find the edges from sites in $R$ to
sites in $Q$ efficiently, we use the fact that these sets of sites are separated by a line
parallel to
either the $x$- or the $y$-axis.

Assume without loss of generality that $\ell$ is the $x$-axis, the sites of $R$ are above
$\ell$ and the sites of $Q$ are below $\ell$, and assume that  $Q$ is sorted along $\ell$.
For each site $p\in R$ we take the part of $D(p)$ which lies below $\ell$ and compute the union of these
``caps''.
 This  union  is bounded from
above by $\ell$ and from below by the lower envelope of the arcs of the boundaries of the caps.
The complexity of the boundary of this union is $O(m)$ and it can be computed in $O(m\log m)$ time~\cite{AgarwalSharir96}. See Figure~\ref{fig:lowerenvelope}.

Once we have computed this union we check for each $q\in Q$
whether $q$ lies inside it.
This can be done by checking whether the intersection, $z$, of a vertical line through $q$ with the union is above or below $q$.
If $q$ is above $z$ then we add the edge $rq$ to $H$ where $r$ is the site such that $z\in \bd D(r)$.
We perform this computation for all sites in $Q$ together by a
 simple sweep in the $x$-direction
while traversing in parallel the lower envelope of the caps  and the sites of $Q$.
This clearly takes $O(m + n)$ time.

\begin{figure}[htb]
 \centering
\includegraphics[scale=0.6]{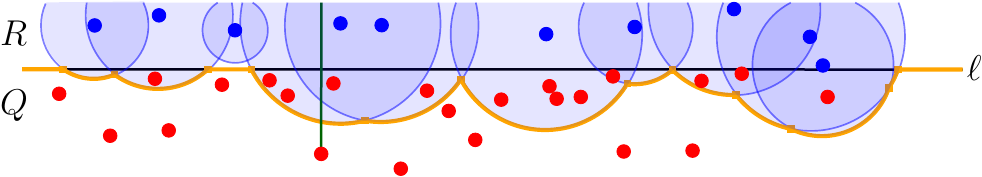}
\caption{The lower envelope (orange), the sites $Q$ (red) and $R$
(blue), and the sweepline (green).}
\label{fig:lowerenvelope}
\end{figure}

We thus proved the following  lemma.
\begin{lemma}
\label{lem:lowerenvelope}
Let $Q$, $R$, and $\ell$ be as
above with $|Q|=n$ and $|R|=m$.
Suppose that $Q$ is sorted along $\ell$ and that
$\ell$ separates $Q$ and $R$.
We can compute in $O(m \log m + n)$ time for each $q \in Q$ one
disk from $R$ that contains it, provided that such a disk exists.
\end{lemma}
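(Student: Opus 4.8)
The plan is to reduce the problem to a single point‑location query against the union of the portions of the disks $D(p)$, $p \in R$, that lie below $\ell$. Take $\ell$ to be the $x$‑axis, with $R$ above and $Q$ below, and for each $p \in R$ let the \emph{cap} be $K_p = D(p) \cap \{y \le 0\}$. Since the center $p$ lies above $\ell$, the vertical cross‑section of $K_p$ at any abscissa $x$ is either empty or an interval whose upper endpoint lies on $\ell$ and whose lower endpoint lies on the circle $C(p, r_p)$. Consequently the cross‑section at $x$ of the union $U = \bigcup_{p \in R} K_p$ is a single interval, bounded above by $\ell$ and bounded below by the point of the lower envelope of the arcs $\bd D(p) \cap \{y \le 0\}$. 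Hence a point $q$ below $\ell$ lies in some $D(p)$ with $p \in R$ if and only if $q \in U$; and if $z$ is the point of the lower envelope at abscissa $q_x$, then $q \in U$ exactly when $q_y \ge z_y$, in which case the disk $D(r)$ whose arc supplies $z$ contains $q$, because the cross‑section of $D(r)$ at $q_x$ is precisely the segment from $z$ up to $\ell$.

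First I would compute the boundary of $U$. Since the $D(p)$ are disks, any two of their bounding circles meet in at most two points, so the $D(p)$ form a family of pseudodisks and the boundary of $\bigcup_{p\in R} D(p)$ has complexity $O(m)$; intersecting with the halfplane $\{y\le 0\}$ preserves this, so the lower envelope of the caps also has complexity $O(m)$. By the result of Agarwal and Sharir~\cite{AgarwalSharir96}, it can be built in $O(m\log m)$ time. This yields the breakpoints of the envelope, sorted by $x$‑coordinate, together with the site $r$ responsible for each arc.

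Next I would answer all $q \in Q$ together by one simultaneous sweep in the $x$‑direction. Because $Q$ is given sorted along $\ell$, i.e.\ by $x$‑coordinate, and the envelope breakpoints are produced in $x$‑order, I can merge the two sequences in linear time, maintaining the envelope arc active at the current abscissa. For each $q$ I read off this arc, compute the envelope point $z$ at $q_x$, and compare $q_y$ with $z_y$: if $q_y \ge z_y$ I output the disk of the responsible site $r$, and otherwise $q$ is covered by no disk of $R$. This phase costs $O(m+n)$, giving a total of $O(m\log m + n)$.

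The only nontrivial ingredient is the linear bound on, and the $O(m\log m)$‑time construction of, the envelope of the caps, which is exactly the pseudodisk/union‑of‑disks machinery cited from Agarwal and Sharir; the remaining work is the correctness argument that coverage by a disk is equivalent to membership in $U$, together with the observation that the pre‑sortedness of $Q$ lets the query phase avoid an extra $O(n\log n)$ sort and thus keeps the dependence on $n$ linear.
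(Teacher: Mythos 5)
Your proposal is correct and takes essentially the same approach as the paper: compute the lower envelope of the disk caps below $\ell$ in $O(m\log m)$ time via Agarwal--Sharir, then answer all queries in $O(m+n)$ time by a single sweep that merges the $x$-sorted points of $Q$ with the envelope breakpoints. The only differences are presentational---you spell out the vertical-convexity of the caps and the pseudodisk argument for the $O(m)$ union complexity, which the paper leaves implicit in its citation.
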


\paragraph*{Analysis.}
We prove that
Algorithm~\ref{alg:efficientedgeselection}
runs in $O(n \log \Phi)$ time and uses
$O(n \log \Phi)$ space. The
running time is dominated by the edge selection step  described in
 Lemma~\ref{lem:lowerenvelope}.
We  argue that each site participates in $O(1)$ edge selection steps
as a disk center (in $R$) and in $O(\log \Phi)$
edge selection steps as a vertex looking for incoming
edges. From these observations (and the fact that
$\Phi = \Omega(n^{1/2})$) the stated time bound essentially follows.
\begin{lemma}
 \label{lem:runningtime}
We construct the spanner $H$ of the transmission graph $G$ in
$O(n\log \Phi )$ time and space.
\end{lemma}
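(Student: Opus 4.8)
The plan is to charge the cost in two parts: (a) building the quadtree $T$ together with the neighborhood relation $N$ and the assigned sets $R_\sigma$, and (b) one execution of Algorithm~\ref{alg:efficientedgeselection} per cone. Since $|\C| = k = O(1)$, it suffices to show that each of these is $O(n\log\Phi)$ in time and space, and then multiply by $k$.

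For part (a): after scaling so that the closest pair has distance $c$, the diameter of $P$ is $c\Phi$, so $T$ has $L = O(\log\Phi)$ levels. At each level the non-empty cells are pairwise disjoint and each contains a site, so there are at most $n$ of them; hence $T$ has $O(n\log\Phi)$ non-empty nodes, which also bounds its space. I would build $T$ top-down: starting from the root cell, at each level I distribute the sites of every non-empty cell among its four children, spending $O(1)$ per site to locate its child, for $O(n)$ per level and $O(n\log\Phi)$ overall. By Lemma~\ref{lem:volume} every cell has $O(c^2)=O(1)$ neighbors, so $N$ has $O(n\log\Phi)$ pairs and is computed by probing, for each non-empty cell, its $O(1)$ candidate neighbor positions in a per-level table of non-empty cells. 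Finally, a site $p$ lies in $R_\sigma$ only for cells of diameter $2^i$ with $r_p/(2(c+1)) < 2^i \le r_p/c$; since $c>2$ this pins $2^i$ to $O(1)$ levels and hence $p$ to $O(1)$ cells, so $\sum_\sigma |R_\sigma| = O(n)$ and the $R_\sigma$ are filled while scanning the sites. All of this fits in $O(n\log\Phi)$ time and space.

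For part (b), fix a cone $C$. The preprocessing in line~\ref{line:preproccesing} merges the (already sorted) active-site lists of the four children of each node, costing $O(n_v)$ for a cell $\sigma_v$ with $n_v$ active sites; since cells on a common level are disjoint, $\sum_{v\text{ at level }i} n_v \le n$, so all merging is $O(n\log\Phi)$. Each edge-selection step for a pair $(\sigma_v,\tau)$ invokes Lemma~\ref{lem:lowerenvelope} with $|Q| = n_v$ and $|R| = m_\tau := |R_\tau\cup\{m_\tau\}|$, at cost $O(m_\tau\log m_\tau + n_v)$. The linear $n_v$ terms are summed exactly as the merging cost, because each $\sigma_v$ has $O(1)$ neighbors $\tau$ inside $C^2_{\sigma_v}$, giving $O(n\log\Phi)$.

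The delicate part, and where I expect the real work, is the $O(m_\tau\log m_\tau)$ term. Here I would use two facts. First, by Lemma~\ref{lem:volume} each cell $\tau$ is a neighbor of $O(1)$ cells $\sigma$, so $\tau$ appears in $O(1)$ edge-selection steps and it suffices to bound $\sum_\tau m_\tau\log m_\tau$ over the cells of $T$. Second, when $R_\tau=\emptyset$ we have $m_\tau=1$ and the term vanishes, so only cells with $R_\tau\neq\emptyset$ contribute; by the $O(1)$-membership property the number of such cells is $O(n)$ and $\sum_{\tau:R_\tau\neq\emptyset} m_\tau \le \sum_\tau |R_\tau| + O(n) = O(n)$. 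Bounding $\log m_\tau \le \log n$ then gives $\sum_\tau m_\tau\log m_\tau = O(n\log n)$, and invoking $\Phi = \Omega(n^{1/2})$, i.e. $\log n = O(\log\Phi)$, converts this into $O(n\log\Phi)$. Summing over the $k=O(1)$ cones preserves the bound. For space, the quadtree with its node lists, the $O(n\log\Phi)$ pairs of $N$, and the total $O(n)$ size of the $R_\sigma$ all fit in $O(n\log\Phi)$, the active-site lists can be released bottom-up and never exceed $O(n)$ at once, and $H$ has only $O(n)$ edges. The one genuine subtlety is this last accounting: the logarithmic overhead of the lower-envelope computation is charged only to the $O(n)$ sites appearing as disk centers in some $R_\tau$, and $\Phi = \Omega(n^{1/2})$ is exactly what absorbs the resulting $\log n$ factor into $\log\Phi$.
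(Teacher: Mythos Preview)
Your proof is correct and follows essentially the same approach as the paper: both charge the merging and the linear $n_v$ part of the edge-selection cost per level for a total of $O(n\log\Phi)$, both use that each site belongs to $O(1)$ sets $R_\tau$ together with Lemma~\ref{lem:volume} to bound the lower-envelope cost by $O(n\log n)$, and both invoke $\Phi=\Omega(n^{1/2})$ to absorb the $\log n$ into $\log\Phi$. The only cosmetic difference is in the bookkeeping for the $O(m\log m)$ term: the paper charges the special site $m_\tau$ a flat $O(1)$ over all $O(n\log\Phi)$ cells, while you observe that when $R_\tau=\emptyset$ the term is $1\log 1=0$ and hence only the $O(n)$ cells with $R_\tau\neq\emptyset$ contribute; these are equivalent accountings of the same cost.
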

\begin{proof}
 The quadtree $T$ can be computed in
 $O(n \log \Phi)$ time and space \cite{4M}, and within this time  bound we can also compute
$N(\sigma_v)$, $R_{\sigma_v}$, and $m_{\sigma_v}$ for
each node $v \in T$.

Merging the sorted lists of the sites in $\sigma_w$ for each child $w$ of $v$ to
obtain the sorted list of the sites in $\sigma_v$ (line~\ref{line:preproccesing} in
Algorithm~\ref{alg:efficientedgeselection}) takes time linear in the number of sites in
$\sigma_v$. Summing up over all nodes $v$ in a single level of $T$ we get that the total merging time per level is
$O(n)$, and
$O(n\log \Phi)$ for all levels.

To analyze the time taken by the edge selection steps
(line~\ref{line:edgeselection2} in
Algorithm~\ref{alg:efficientedgeselection}),
consider a particular pair $(\sigma, \tau) \in N$ for which
the algorithm runs the edge selection step.
By Lemma~\ref{lem:lowerenvelope}, if we charge $m_{\tau}$ by $O(1)$, each disk center in
$R_{\tau}$ by $O(\log n)$ and each active site in $\sigma \cap P$ by $O(1)$ then the total charges
cover the cost of the edge selection step for $(\sigma, \tau)$.
There are $O(n\log \Phi)$ nodes in  $T$ and therefore
 $O(n\log \Phi)$ cells $\tau$ in $\Q$.
By Lemma~\ref{lem:volume} each such cell $\tau$ participates
in an edge selection step of  $O(c^2) = O(1)$ pairs. So the total charges to
the site $m_{\tau}$ over all cells $\tau$, is $O(n\log \Phi)$.

By construction, each $p \in P$ is assigned to $O(1)$
sets $R_{\tau}$ and by Lemma~\ref{lem:volume} each $\tau$ participates
in an edge selection steps of  $O(c^2) = O(1)$ pairs.
It follows that the total charges to a site $p$ from edge selections steps
of pairs $(\sigma, \tau)$ such that $p\in R_\tau$ is $O(\log n)$.

Finally, each site is active for
$O(c^2) = O(1)$ pairs in $N$ at each of  $O(\log \Phi)$ levels.
So the total charges to a site $p$ from
edge selections steps
of pairs $(\sigma, \tau)$ such that $p$ is active in $\sigma \cap P$ is
$O( n \log \Phi)$. We conclude that the total running time of
all edge selection steps
is $O(n \log n + n \log \Phi) = O(n \log \Phi)$,
since $\log \Phi = \Omega(\log n)$.
\end{proof}

Theorem~\ref{thm:2dspannerSpread} follows by combining
Lemmas~\ref{lem:edgeapproximation} and~\ref{lem:runningtime}.

\subsection{From Bounded Spread to Bounded Radius Ratio}
\label{sec:spannerPsi}

Let $P\subset \R^2$ be a set of sites with radius ratio $\Psi$.
We extend our spanner construction
from Section~\ref{sec:spanner} such that the running time depends on $\Psi$, the ratio between the largest
to smallest radii, rather than
on the spread $\Phi$.  This is
a more general result as we may assume that $\Psi \leq 2 \Phi$ (see Section~\ref{sec:prelims}).
We prove the following theorem.
\begin{theorem}
\label{thm:2dspanner}
  Let $P$ be a set of $n$ sites in the plane
  with radius ratio $\Psi$. For any fixed $t > 1$,
  we can compute
  a $t$-spanner for the transmission graph $G$ of $P$ 
  in  $O(n(\log n + \log \Psi))$ time and  $O(n \log \Psi)$ space.
\end{theorem}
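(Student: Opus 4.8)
The main challenge in moving from bounded spread $\Phi$ to bounded radius ratio $\Psi$ is that the quadtree $T$ from Section~\ref{sec:spanner} has depth $O(\log\Phi)$, which can be much larger than $O(\log\Psi)$. The plan is to replace the full quadtree by a \emph{compressed} quadtree, which has only $O(n)$ nodes and can be built in $O(n\log n)$ time, and then to argue that the annulus decomposition and the edge-selection procedure can be carried out on this compressed structure with running time depending on $\Psi$ rather than $\Phi$.

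\paragraph*{Plan.}
First I would recall the key structural observation that lets $\Psi$ take over the role of $\Phi$: the level $i$ at which an edge $pq$ is handled satisfies $|pq| \in [c,2c)2^i$, and since $pq$ is an edge we have $c2^i \le |pq| \le r_p \le r_{\max}$; moreover only sites $p$ with $r_p$ within a constant factor of $c\,2^i$ contribute to $R_\sigma$ at level $i$ (by the definition $r_p \in [c, 2(c+1))\diam(\sigma)$). Hence a fixed site $p$ can be \emph{assigned} to cells across only $O(1)$ consecutive levels, and more importantly the \emph{total span of relevant levels} is governed by the ratio of the largest to smallest radius, i.e.\ $O(\log\Psi)$ levels, not $O(\log\Phi)$. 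So even though the geometric spread may force the quadtree to be deep, edges and assigned sites only ``live'' within a window of $O(\log\Psi)$ levels around each site's radius scale.

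\paragraph*{Steps.}
The second step is to build a compressed quadtree $T$ on $P$ in $O(n\log n)$ time and $O(n)$ space, where maximal chains of single-child nodes are collapsed into single ``compressed'' edges. I would then redefine $\Q$, $N$, and $R_\sigma$ analogously to Lemma~\ref{lem:3.9}, taking care that the neighborhood relation $N$ is still defined between cells of equal diameter satisfying Property~(i); the compressed edges must be expanded implicitly so that the relevant non-empty cells at each needed level are available. The crucial accounting is that each site $p$ participates as a disk center (in some $R_\tau$ or as $m_\tau$) in $O(1)$ edge-selection steps, exactly as before, so that contribution is $O(n\log n)$ total; but as a \emph{vertex seeking incoming edges}, a site is now active across at most $O(\log\Psi)$ levels rather than $O(\log\Phi)$, since once an edge into $q$ is found $q$ becomes inactive, and the range of levels at which $q$ can possibly receive an edge $pq$ with $q\in D(p)$ is bounded by the ratio of radii. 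Summing the cost of Lemma~\ref{lem:lowerenvelope} over all these participations gives $O(n\log n + n\log\Psi)$ time and $O(n\log\Psi)$ space. The correctness (that $H$ is a $t$-spanner) carries over unchanged from Lemmas~\ref{lem:shorteredge} and~\ref{lem:edgeapproximation}, since those arguments are purely geometric and do not depend on how the decomposition is stored.

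\paragraph*{Main obstacle.}
The hardest part will be handling the compressed edges correctly in the definition and traversal of $N$. In an uncompressed quadtree, every level is materialized, so for each needed pair $(\sigma,\tau)$ at a given diameter the cells literally exist as nodes; in a compressed quadtree, a long chain of empty intermediate cells has been contracted, and I must ensure that the annulus decomposition still satisfies Property~(ii) of Definition~\ref{def:decomposition}, namely that every edge $pq$ is represented at the single level $i$ with $|pq|\in[c,2c)2^i$. I would resolve this by showing that at such a level both relevant cells $\sigma\ni q$ and $\tau\ni p$ are either explicit quadtree nodes or can be recovered in $O(1)$ time from the compressed structure (the point-location / cell-of-a-point operation assumed in Section~\ref{sec:prelims}), and that the bottom-up merging of sorted site lists (line~\ref{line:preproccesing} of Algorithm~\ref{alg:efficientedgeselection}) still costs $O(n)$ per relevant level. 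Once this bookkeeping is in place, the same charging scheme as in Lemma~\ref{lem:runningtime}, with $\log\Phi$ replaced by $\log\Psi$, yields the claimed bounds, and Theorem~\ref{thm:2dspanner} follows.
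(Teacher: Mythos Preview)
Your approach diverges from the paper's and, as written, has a genuine gap.

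The paper does \emph{not} use a compressed quadtree for Theorem~\ref{thm:2dspanner}. Instead it (i) rescales so the smallest radius equals $c$; (ii) partitions $P$ into subsets of diameter $O(n\Psi)$ that are pairwise unreachable in $G$ (Lemma~\ref{lem:diamPartition}); (iii) on each subset builds an ordinary, \emph{uncompressed} quadforest truncated to the lowest $L=\lceil\log\Psi\rceil$ levels; and (iv) observes that the resulting annulus decomposition is only \emph{partial}---it covers every edge $pq$ whose level-$0$ cells are at distance at least $c-2$, but not the short edges between nearby level-$0$ cells. For those short edges, the sites in any two level-$0$ cells at distance less than $c-2$ form a clique in $G$ (all pairwise distances are at most $c$, the minimum radius), so a standard Euclidean $t$-spanner on each such pair handles them. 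The running-time analysis of Lemma~\ref{lem:runningtime} then goes through verbatim with $\log\Phi$ replaced by $\log\Psi$, because the quadforest literally has only $O(\log\Psi)$ levels.

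Your proposal omits both the partition step and the clique trick, and relies on a compressed quadtree instead. The gap is your claim that a site ``is now active across at most $O(\log\Psi)$ levels.'' You correctly bound the \emph{upper} end of the range via $c\,2^i\le |pq|\le r_p\le r_{\max}$, but you give no lower bound: edges $pq$ can be arbitrarily short compared to $r_{\min}$, so the grid level $i$ with $|pq|\in[c,2c)2^i$ can be arbitrarily negative. A site $q$ whose only incoming edges live at coarse levels---or one with no incoming edges at all---stays active from its leaf cell all the way up. You say the compressed edges ``must be expanded implicitly so that the relevant non-empty cells at each needed level are available,'' but this is exactly the hard part: either you process all grid levels along each compressed edge (and the $O(n)$-per-level merging/sweeping cost blows up to $O(n\log\Phi)$), or you process only the $O(n)$ explicit tree nodes (and then the pair $(\sigma,\tau)$ at the one correct level~$i$ for a given edge may never be examined, so Property~(ii) of Definition~\ref{def:decomposition} fails). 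Making Algorithm~\ref{alg:efficientedgeselection} run on a compressed quadtree is precisely the obstacle that forces Section~\ref{sec:spannerChan} to replace the sorted-list sweep with Chan's dynamic nearest-neighbor structure. Without either the paper's clique trick at the bottom (which caps the fine levels) or a substitute for the sweep, your argument does not deliver the stated bound.
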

The main observation which we use is that
sites that are close together form a clique in $G$ and can be handled using
classic spanner constructions, while sites that are far away from each
other belong to distinct components of $G$ and can be dealt
with independently.

Given $t$, we pick sufficiently large constants $k=k(t)$ and $c=c(t)$ as specified in Section~\ref{sec:spanner}.
We scale the input such that the \emph{smallest radius} is
$c$. Let $M = c\Psi$ be the largest radius after we did the scaling.
First, we partition $P$ into sets that are far apart
and can be handled separately.

\begin{lemma}\label{lem:diamPartition}
  We can
  partition $P$ into sets $P_1, \dots, P_\ell$, such
  that each set $P_i$ has diameter $O(n\Psi)$ and
   for any $i \neq j$, no site of $P_i$ can
  reach a site of $P_j$ in $G$. Computing the partition takes
$O(n\log n)$ time and $O(n)$ space.
\end{lemma}

\begin{proof}
We assign to each site $p \in P$ an axis-parallel square
$S_p$ that is centered at $p$ and has side-length $2M$.
We define the intersection graph $G_S$
that has a vertex for each site in $P$, and
 an edge between two vertices $p$ and $q$ if and
only if $S_p \cap S_q \neq \emptyset$. ($G_S$ is undirected.)

If follows that if there
is no (undirected) path from $p$ to $q$ in $G_S$, then there
is no (directed) path from $p$ to $q$ in $G$.
We can compute the connected components of $G_S$ in $O(n \log n)$
time by sweeping the plane using a binary search tree~\cite{PreparataSh85}.
Let $P_1, \dots, P_\ell$ be the vertex sets of these connected
components. By construction, each
set of sites $P_i$ has diameter $O(nM)$ and
for any $i \neq j$, no site in $P_i$ can reach a site in $P_j$
in $G$.
\end{proof}

By Lemma~\ref{lem:diamPartition},  we  may assume that the diameter of our
input set $P$ is $O(n\Psi)$.
We compute a hierarchical decomposition
$T$ for $P$ as in Section~\ref{sec:spanner},
with a little twist as follows.
We
 translate $P$ so that it fits in a single grid cell $\sigma$ of
diameter $O(n\Psi)$.
Starting from $\sigma$, we recursively subdivide
each non-empty cell  into
four congruent cells of half the
diameter.
We do not subdivide cells of level $0$ whose  diameter
is $1$. We  partition all cells of a particular level in $O(n)$ time and
$O(n)$ space.

We construct a quadforest $T$ such that the roots of its trees
correspond to the non-empty cells of
 level $L = \lceil \log \Psi\rceil$ in our decomposition.
Each internal node of $T$ corresponds to a non-empty cell obtained when subdividing the cell of its parent.
It suffices to store only the lowest $L$ levels,
since larger cells cannot contribute
any edges to the spanner (as we will argue below). The forest  $T$ requires
$O(n \log \Psi)$ space and we compute it in
$O(n (\log n + \log \Psi))$ time.

We cannot derive from $T$ a $c$-separated annulus
decomposition for $G$ as we did in Section~\ref{sec:spanner}.
In particular a cell corresponding to a leaf of $T$ may now contain many sites
that are adjacent in $G$.
For edges induced by such pairs of sites we cannot satisfy Property (ii) of Definition~\ref{def:decomposition}.

We can (and do) derive from $T$ a \emph{partial $c$-separated
annulus decomposition $(\Q,N,R_\sigma)$} exactly as described in
Section \ref{sec:spanner} before Lemma \ref{lem:3.9}.
This decomposition satisfies Property (ii) of
Definition~\ref{def:decomposition} for all edges
$pq$ with $d(\sigma,\tau) \ge (c-2)$, where $\sigma$ and $\tau$ are
 the level 0 cells of $T$ containing $q$ and $p$, respectively.
The proof that Property (ii) of Definition~\ref{def:decomposition}
holds for these edges is the same as the proof of Lemma
\ref{lem:3.9}.
 In particular,
in the proof of Lemma \ref{lem:3.9}, we argue that pairs of cells at level $i$
guarantee Property (ii) of Definition~\ref{def:decomposition} for edges of length in $[c,2c)2^i$.
Since the edges of $G$ are of length at most
$M = c\Psi$,
the cells up to level $L = \lceil \log \Psi\rceil$ suffice to guarantee
Property (ii) of Definition~\ref{def:decomposition} for all
edges $pq$ with $d(\sigma,\tau) \ge (c-2)$.

We mark all sites of $P$ as active, and we run
Algorithm~\ref{alg:efficientedgeselection} of
Section~\ref{sec:spanner} using $T$ and the partial $c$-separated annulus
decomposition that we derived from it.
The resulting graph $H$ is not yet a $t$-spanner since
the decomposition  was only partial.

To make $H$ a spanner we add to it more edges that ``take care'' of the edges not
``covered'' by the $c$-separated annulus
decomposition. We consider each pair of level $0$ cells $\sigma$
and $\tau$ with
$d(\sigma, \tau) < c-2$. The set of sites $Q=(P\cap \sigma) \cup (P\cap \tau)$ form a clique, since the distance between
each pair of sites in $Q$ is no larger than $c$.
We compute a Euclidean $t$-spanner
for $Q$ of size $O(|Q|)$ in $O(|Q| \log |Q|)$ time~\cite{NarasimhanSmid07} and
for each (undirected) edge $pq$ of this spanner we add $pq$ and $qp$  to $H$.
As each site $p \in P$ participates in $O(c^2)$ such spanners,
we generate in total $O(n)$ edges in $O(n \log n)$ time.

We now prove that $H$ is indeed a $t$-spanner. The proof is analogous to the proof
of Lemma~\ref{lem:edgeapproximation}.

\begin{lemma}\label{lem:edgeapproximation_2}
For any $t > 1$, there are constants
$c=c(t)$ and $k=k(t)$ such that $H$ is a $t$-spanner for the transmission graph
$G$.
\end{lemma}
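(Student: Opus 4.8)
The plan is to follow the inductive scheme of Lemma~\ref{lem:edgeapproximation} almost verbatim, adding one extra case to account for the fact that the annulus decomposition is now only \emph{partial}. Concretely, I fix constants $c=c(t)$ and $k=k(t)$ large enough for Lemma~\ref{lem:shorteredge}, and I prove by induction on the rank of an edge in the sorted order of edge lengths that every edge $pq$ of $G$ satisfies $d_H(p,q)\le t|pq|$. Once this holds for all edges, the spanner property for an arbitrary pair $p,q$ follows in the usual way: take a shortest path in $G$ and replace each edge by its $t$-approximating path in $H$, so that $d_H(p,q)\le t\,d_G(p,q)$.

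The case split is driven by the level-$0$ cells $\sigma_0\ni q$ and $\tau_0\ni p$. For a \textbf{far edge}, meaning $d(\sigma_0,\tau_0)\ge c-2$, the partial decomposition was shown to satisfy Property~(ii) of Definition~\ref{def:decomposition}, which is exactly the single hypothesis Lemma~\ref{lem:shorteredge} needs; since the three geometric facts behind it (Lemmas~\ref{lem:centercone}, \ref{lem:annulusDiameter}, and~\ref{lem:lengthofchosenedge}) are purely metric and insensitive to the switch from bounded spread to bounded radius ratio, Lemma~\ref{lem:shorteredge} applies unchanged. It yields either $pq\in H$ or an edge $rq\in H$ with $|pr|\le|pq|-|rq|/t$. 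In the latter case $|pr|<|pq|\le r_p$, so $pr$ is an edge of $G$ strictly shorter than $pq$; the induction hypothesis gives $d_H(p,r)\le t|pr|$, and concatenating with $rq$ yields $d_H(p,q)\le t|pr|+|rq|\le t|pq|$, precisely as in Lemma~\ref{lem:edgeapproximation}.

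For a \textbf{close edge}, meaning $d(\sigma_0,\tau_0)< c-2$, I would not use the induction at all but instead appeal directly to the clique spanners. Here the pair $(\sigma_0,\tau_0)$ is one of the close level-$0$ pairs, so $p$ and $q$ both lie in the set $Q=(P\cap\sigma_0)\cup(P\cap\tau_0)$ for which a Euclidean $t$-spanner was built. Because any two sites of $Q$ are within distance $c$ and the smallest radius equals $c$, the set $Q$ is a clique of $G$ \emph{in both directions}; hence every undirected spanner edge, which we inserted into $H$ with both orientations, is a genuine directed edge of $G$. The undirected spanner path from $p$ to $q$ therefore becomes a directed $p$-to-$q$ path in $H$ of length at most $t|pq|$, giving $d_H(p,q)\le t|pq|$ directly.

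The base case (the globally shortest edge) fits the same dichotomy: a close edge is covered by the clique argument, while for a far edge Lemma~\ref{lem:shorteredge} cannot return a strictly shorter edge $pr$ of $G$ without contradicting minimality, forcing $pq\in H$. The hard part will not be any single calculation but rather checking that the two cases are genuinely exhaustive and dovetail correctly: every edge of $G$ must land in one of them, the far case must supply exactly the Property~(ii) hypothesis that Lemma~\ref{lem:shorteredge} consumes, and the close case must be precisely the complementary family of pairs for which the clique spanners were constructed, so that no edge of $G$ is left uncovered by the partial decomposition.
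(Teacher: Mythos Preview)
Your proposal is correct and follows essentially the same approach as the paper's proof: split on whether the level-$0$ cells of the endpoints are close or far, handle close edges directly via the Euclidean clique spanners, and handle far edges by invoking Lemma~\ref{lem:shorteredge} inside the induction on edge rank, exactly as in Lemma~\ref{lem:edgeapproximation}. Your write-up is in fact more explicit than the paper's in justifying why the clique-spanner edges are genuine directed edges of $G$ (the paper simply asserts ``by construction, $H$ is a subgraph of $G$''), but the underlying argument is identical.
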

\begin{proof}
By construction, $H$ is a subgraph of $G$.
Let $pq$ be an edge of $G$,
and let $\sigma$ and $\tau$ be the level $0$ cells  with
$q \in \sigma$ and $p \in \tau$.
If $d(\sigma, \tau) < c-2$,
then the Euclidean $t$-spanner for
$\sigma$ and $\tau$ contains a path from $p$ to $q$
of length at most $t|pq|$.

For the remaining edges, the lemma is proved by induction
on the rank of the edges when we sort them by length,  as in
Lemma~\ref{lem:edgeapproximation}. The proof is almost
verbatim as before; we only comment on the base case.
Let $pq$ be the shortest edge in $G$.
If the endpoints $p$ and $q$ lie in
level 0 cells whose distance is less than $c-2$, we have
already argued that $H$ contains
an approximate path from $p$ to $q$. Otherwise, the same argument
as in Lemma~\ref{lem:edgeapproximation} applies, and the
algorithm includes $pq$ in $H$.
\end{proof}

Using Lemma~\ref{lem:edgeapproximation_2},
Theorem~\ref{thm:2dspanner} follows just as Theorem~\ref{thm:2dspannerSpread} in
Section~\ref{sec:spanner}. The analysis of the space and time
required by our construction is
exactly as in Lemma~\ref{lem:runningtime}, but now $T$
has $O(\log \Psi)$ levels.

\subsection{Spanners for Unbounded Spread and Radius Ratio}
\label{sec:spannerChan}

We eliminate
the dependency of our bounds on the radius ratio
at the expense of a more involved data structure and an additional
polylogarithmic factor in the running time.
Given $P \subset \R^2$ and the desired stretch factor $t > 1$, we choose
  appropriate  parameters $c = c(t)$ and $k(t)$ as in Section \ref{sec:spannerPsi} and
 rescale $P$ such that the distance between the closest pair of points in $P$ is
 $c + 2$.

To get the spanner of $G$ we compute a compressed quadtree $T$ for $P$.
A \emph{compressed quadtree} is a rooted tree in which each internal
node has degree $1$ or $4$. Each node $v$ is
associated with a cell $\sigma_v$ of a grid $\Q_i$.
If $v$ has degree $4$, then the  cells associated
of its children partition $\sigma_v$ into
$4$ congruent squares of half the diameter, and
at least two of them must be non-empty.
If $v$ has degree $1$, then the cell associated
with the only child $w$ of $v$ has diameter at most
$\diam(v)/4$ and  $(\sigma_v \setminus \sigma_w)\cap P = \emptyset$.
Each internal node of $T$ contains at least two sites
in its cell and each leaf at most one site.
For technical reasons we assume that the cell associated with a leaf $v$ has diameter $1$.
Since $v$ contains a single point $p$ we can artificially guarantee this by shrinking the 
cell associated with $v$ to the cell of
diameter one containing $p$.

Note that, in contrast with (uncompressed)  quadtrees, the diameter of
$\sigma_v$ may be smaller than $2^{L-i}$, where $i$ is the the distance of $v$ to the root and
$2^L$ is the diameter of the root.
A compressed quadtree for $P$ with $O(n)$ nodes
can be computed in $O(n\log n)$ time \cite{HarPeled11}.

To simplify the notation in the rest of this section, we write $\diam(v)$ instead of
$\diam(\sigma_v)$,  and for two nodes $v,w$, we
write $d(v,w)$ for $d(\sigma_v, \sigma_w)$.

Our approach is to use the algorithm from Section~\ref{sec:spanner}
on the compressed quadtree $T$.
One problem with this approach is
that the depth of
$T$ may be linear, so  considering all sites for incoming
edges at each level, as in Algorithm~\ref{alg:efficientedgeselection}, would be too expensive.
We tackle this difficulty by using
 Chan's dynamic nearest neighbor data structure
to speed up this stage. We achieve this speedup by reusing at a node $v$ the
largest structure among the structures at the children of $v$. The data structure of Chan has
 the following properties.

\begin{theorem}[Chan, Afshani and Chan, Chan and Tsakalidis, Kaplan 
et al~\cite{AfshaniCh09,Chan10, ChanTsakalidis15,KaplanMuRoSeSh17}]
\label{thm:chandynamicNN}
There exists a dynamic data structure that maintains a planar point set $S$
such that
\begin{enumerate}[(i)]
\item we can insert a point into $S$ in $O(\log^3 n)$  amortized time;
\item we can delete a point from $S$ in  $O(\log^5 n)$ amortized time; and
\item given a query point $q$, we can find the nearest neighbor
 of a query point $q$ in $S$ in  $O(\log^2 n)$ worst case time.
\end{enumerate}
The space requirement is $O(n)$.
\end{theorem}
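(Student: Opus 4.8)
The plan is to establish the three operations by reducing the planar nearest-neighbor problem to a three-dimensional lower-envelope problem and then dynamizing the resulting query structure. Map each site $p = (p_x, p_y) \in S$ to the non-vertical plane $h_p$ in $\R^3$ given by $z = -2 p_x x - 2 p_y y + (p_x^2 + p_y^2)$; a direct computation shows that for any query point $q = (q_x, q_y)$ the height of $h_p$ over $q$ equals $|pq|^2 - (q_x^2 + q_y^2)$. Hence the plane lying lowest over $q$ comes from the site nearest to $q$, and a nearest-neighbor query becomes a vertical ray-shooting query against the lower envelope of the $n$ planes $\{h_p : p \in S\}$. The whole task therefore reduces to maintaining this lower envelope — equivalently, the lower convex hull of the dual points — under insertions and deletions while supporting such ray-shooting queries.

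For the query itself I would use a \emph{shallow cutting} of these planes. A shallow cutting is a collection of $O(n/r)$ disjoint vertical prisms that together cover every point lying below at most $r$ of the planes, each prism storing a \emph{conflict list} of the $O(r)$ planes that cross it or pass below it. To answer a query I locate $q$ in the cutting and then find the true lowest plane by brute force within the $O(r)$ planes of its conflict list. Stacking two such levels — a coarse cutting to route the query and a finer structure inside each conflict list — yields the worst-case $O(\log^2 n)$ query time of part (iii); the $O(n)$ space bound follows since the prisms are disjoint and the conflict lists have total size $O(n)$.

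To obtain the update bounds I would dynamize this static query structure. Insertions are handled by the logarithmic method: partition $S$ into $O(\log n)$ groups of geometrically increasing size, store each as its own shallow-cutting query structure, answer a query by taking the closest candidate over all groups, and merge groups by periodic rebuilding. Since a shallow cutting can be constructed in near-linear deterministic time — this is exactly the contribution of Chan and Tsakalidis — the amortized insertion cost is $O(\log^3 n)$, giving part (i). Deletions cannot be absorbed by simply adding a layer; instead I would maintain the cuttings so that a deleted plane is removed from every conflict list it occupies, with the affected prisms locally repaired and the structure lazily rebuilt once enough deletions accumulate, charging each reconstruction against the deletions that triggered it.

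The hard part is the deletion operation and its amortized analysis, which is precisely why its bound $O(\log^6 n)$ is so much larger than that of insertion. The difficulty is that removing a single plane can expose portions of the lower envelope that were previously hidden, so the conflict structure must be rebuilt over regions far larger than the single object deleted; controlling this requires Chan's hierarchical reconstruction scheme together with a careful potential argument that spreads the cost of each rebuild over many updates. I would therefore treat the query and insertion parts as relatively routine applications of shallow cuttings and the logarithmic method, and concentrate the real effort on the deletion machinery, which is the technical heart of the cited results.
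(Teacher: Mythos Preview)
The paper does not prove this statement at all: Theorem~\ref{thm:chandynamicNN} is quoted as a black-box result from the cited references~\cite{AfshaniCh09,Chan10,ChanTsakalidis15}, with only a short historical remark following it and no proof. There is therefore nothing in the paper to compare your argument against.

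That said, your sketch is a faithful high-level outline of the approach taken in those references: the standard lifting of planar nearest neighbor to vertical ray shooting on a lower envelope of planes in $\R^3$, Matou\v{s}ek-style shallow cuttings for the static query structure, the logarithmic method for insertions, and Chan's lazy hierarchical rebuilding for deletions, with the deterministic near-linear construction of shallow cuttings from Chan--Tsakalidis removing the randomization. Two small points to be careful about if you actually carry this out. First, the $O(n)$ space bound is not automatic from shallow cuttings alone; Chan's original structure used $O(n\log\log n)$ space (as the paper itself notes in the proof of Lemma~\ref{lem:runninglog6n}), and the linear-space version relies on the range-reporting structure of Afshani and Chan. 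Second, your justification of the $O(\log^3 n)$ insertion bound is too quick: near-linear construction time plus the logarithmic method yields a smaller power of $\log n$ per insertion; the stated bound arises because the groups themselves must support deletions and queries, so each rebuild is more expensive than a bare shallow-cutting construction. These are exactly the places where the cited papers do the real work, and your proposal correctly flags the deletion machinery as the part that would need the most effort.
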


We note that the history of Theorem~\ref{thm:chandynamicNN} is
a bit complicated: Chan's original paper~\cite{Chan10} describes
a \emph{randomized} data structure with $O(n \log\log n)$ space.
Afshahni and Chan~\cite{AfshaniCh09} describe a \emph{randomized}
three-dimensional range reporting structure that improves the space
to $O(n)$. Chan and Tsakalidis~\cite{ChanTsakalidis15} show how
to make both the dynamic nearest neighbor structure and the
range reporting structure deterministic. Kaplan et 
al~\cite{KaplanMuRoSeSh17} reduce the amortized deletion time
from $O(\log^6 n)$ to $O(\log^5 n)$, which gives the current
form of Theorem~\ref{thm:chandynamicNN}.

Another problem arises when we try to use the algorithm from Section~\ref{sec:spanner}
on the compressed quadtree $T$. We need to
 define an appropriate neighborhood
relation. The neighborhood relation from Section~\ref{sec:spanner} relied on the fact that
in a quadtree each point appears for every $i$ in the appropriate range in exactly one cell 
whose diameter is $2^i$. 
This is no longer the case in a compressed quadtree.

As in Section~\ref{sec:spanner},
the neighborhood relation $N$ which we define here would consist of pairs $(\sigma_v,\sigma_w)$
such that $\diam(v) = \diam(w)$ and
 $d(v,w) \in
[c-2, 2c) \diam(v)$.
The set $R_{\sigma_v}$ would consist of all sites
in $\sigma_v \cap P$ whose radius is in
$[c-2, 2(c+1))\diam(v)$, a slightly larger interval
than in the previous sections.
To make sure that $N$ and  $R_\sigma$ fulfill Property (ii) of
Definition~\ref{def:decomposition},
we insert $O(n)$ additional nodes into $T$ so that
$\Q$ contains the appropriate cells.
To find these nodes, we adapt the WSPD
algorithm of Callahan and Kosaraju \cite{CallahanKo95}.

\begin{lemma}
\label{lem:augmentingwithwspd}
Given a constant $c > 5$, we can in $O(n \log n)$ time insert $O(n)$ nodes into
$T$ so that
$\Q = \{ \sigma_v \mid v \in T\}$ with $N$ and $R_{\sigma}$ defined as stated above
is a $c$-separated annulus decomposition for $G$. In the same time, we
can compute $N$ and all sets $R_{\sigma}$.
\end{lemma}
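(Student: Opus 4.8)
The plan is to adapt the well-separated pair decomposition (WSPD) algorithm of Callahan and Kosaraju so that, instead of producing well-separated pairs, it walks the compressed quadtree $T$ and inserts exactly the missing cells needed to realize every pair $(\sigma_v,\sigma_w)$ required by the neighborhood relation $N$. Recall that $N$ consists of pairs of same-level nodes $v,w$ with $d(v,w) \in [c-2,2c)\diam(v)$. In an ordinary quadtree these cells would already be present, but in a compressed quadtree a long degree-$1$ path may skip the level at which the cell containing $p$ or $q$ should sit. So the core task is: for each edge $pq$ of $G$ of length in $[c,2c)2^i$, ensure that the two level-$i$ cells $\sigma,\tau$ with $q\in\sigma$, $p\in\tau$ actually appear as nodes of $T$, so that the argument of Lemma~\ref{lem:3.9} goes through verbatim and guarantees Property~(ii) of Definition~\ref{def:decomposition}.

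First I would recall the recursive WSPD procedure: it maintains a pair of quadtree nodes $(u,w)$ and, if they are not yet well separated (here the analogous test is $d(u,w) < (c-2)\max\{\diam(u),\diam(w)\}$), it recurses by splitting the node of larger diameter. I would modify the recursion so that whenever a pair $(u,w)$ satisfies the separation condition $d(u,w)\in[c-2,2c)\diam(u)$ with $\diam(u)=\diam(w)$, I record $(u,w)$ as a member of $N$; and whenever the compressed structure forces me to descend past a level where a required cell is missing (i.e.\ the pair of same-level cells does not exist because of a compressed edge), I explicitly create the corresponding quadtree node and splice it into $T$. Standard WSPD analysis shows the recursion generates $O(n)$ pairs for constant $c$, via a packing/volume argument identical in spirit to Lemma~\ref{lem:volume}; hence only $O(n)$ new nodes are inserted, and the whole recursion runs in $O(n\log n)$ time given that $T$ was built in $O(n\log n)$ time. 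Computing each $R_\sigma$ as the sites in $\sigma\cap P$ with radius in $[c-2,2(c+1))\diam(\sigma)$ and the max-radius site $m_\sigma$ can be folded into a single bottom-up pass over the augmented $T$, again within $O(n\log n)$.

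With the cells present, the verification of Definition~\ref{def:decomposition} is then the same calculation as in Lemma~\ref{lem:3.9}: Property~(i) holds by construction of $N$, and for Property~(ii) I take an edge $pq$ with $|pq|\in[c,2c)2^i$, locate the level-$i$ cells $\sigma,\tau$ (now guaranteed to be in $\Q$), check $(\sigma,\tau)\in N$ via $(c-2)2^i\le |pq|-2\diam(\sigma)\le d(\sigma,\tau)\le |pq|<c2^{i+1}$, and then split on whether $r_p<(c+1)2^{i+1}$ (giving $p\in R_\sigma$) or $r_p\ge(c+1)2^{i+1}$ (giving $q\in\tau\subset D(m_\sigma)$); the slightly enlarged radius interval $[c-2,2(c+1))$ for $R_\sigma$ absorbs the fact that the inserted cell may have a diameter not exactly equal to a power of two relative to where the edge analysis places it. The main obstacle I expect is precisely this bookkeeping: arguing that the compressed-quadtree version of the WSPD recursion, when it crosses a degree-$1$ edge, inserts cells at \emph{every} level $i$ for which some edge of $G$ demands a level-$i$ pair, while still inserting only $O(n)$ nodes overall. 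The resolution is that an edge $pq$ only ever needs one specific level $i$ determined by $|pq|$, and the WSPD packing bound caps the number of distinct $(\sigma,\tau)$ pairs (and hence distinct levels requiring new cells along any root-to-node path) at $O(c^2)$ per node, so the total stays linear.
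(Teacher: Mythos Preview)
Your high-level plan is right---use a WSPD-style traversal of $T$, insert $O(n)$ nodes, then verify Definition~\ref{def:decomposition}---but the execution has a genuine gap. You want to ensure that for every edge $pq$ with $|pq|\in[c,2c)2^i$ the level-$i$ cells containing $p$ and $q$ are actually present in $T$, so that the argument of Lemma~\ref{lem:3.9} goes through verbatim. But your recursion has no operational way to detect which levels $i$ are ``required'': the WSPD traversal sees only the quadtree structure, not the edges of $G$ or the radii, so the phrase ``whenever the compressed structure forces me to descend past a level where a required cell is missing'' is not a well-defined rule. Worse, your $O(n)$ bound on insertions is not justified by the WSPD packing argument: that argument bounds the number of \emph{output pairs}, not the number of intermediate levels a compressed edge may span. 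You have not argued that the set of levels $i$ that are ``needed'' along any single compressed edge is $O(1)$.

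The paper sidesteps this entirely by not attempting to place each edge at its natural level $i$. It runs the \emph{unmodified} WSPD algorithm first, obtaining $O(n)$ pairs $\{v,w\}$ with the covering property that every pair of sites $(p,q)$ lies in exactly one such pair. For each WSPD pair it then computes a single target diameter $r=\min\{d(v,w)/c,\diam(\overline w)\}$ (rounded down to a power of~$2$, where $\overline w$ is the parent through which the recursive call came) and inserts exactly two nodes $v',w'$ of diameter $r$ between $v,\overline v$ and $w,\overline w$. This immediately gives $O(n)$ insertions. The verification of Property~(ii) is then different from Lemma~\ref{lem:3.9}: for an edge $pq$ one takes the WSPD pair $\{v,w\}$ covering $(q,p)$, shows directly via inequalities on $r$ that $(\sigma_{v'},\sigma_{w'})\in N$, and observes $r_p\ge d(v',w')\ge(c-2)\diam(w')$, whence either $p\in R_{\sigma_{w'}}$ or $r_{m_{\sigma_{w'}}}\ge (c+1)\diam(w')$. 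The enlarged radius interval $[c-2,2(c+1))$ for $R_\sigma$ is needed precisely because the level of $v',w'$ is dictated by the WSPD pair rather than by $|pq|$, so $r_p/\diam(w')$ can be as small as $c-2$ rather than $c$; your explanation of this point (``diameter not exactly a power of two'') is off, since cell diameters are always powers of two.
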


\LinesNotNumbered
\begin{algorithm}[htbp]
call $\wspdone(r)$ on the root of $T$
\newline
\nl $\wspdone(v):$ \\
\nl \If{$v$ \textup{is a leaf}} {
\nl   \Return $\emptyset$ }
\nl \Else{
\nl   Return the union of $\wspdone(w)$ and $\wspdtwo(w_1,w_2)$ for all
  children $w$ and pairs of distinct children $w_1,w_2$ of $v$
}
\setcounter{AlgoLine}{0}
\nl $\wspdtwo(v,w):$
\newline
\nl \If{$d(v,w) \geq c\max \{\diam(v),\diam(w)\}$ } {
\nl   \Return $\{v,w\}$
}
\nl \ElseIf{ $\diam(v) \leq \diam(w)$}{
\nl  \Return the union of $\wspdtwo(v,u)$ for all children
  $u$ of $w$.
}
\nl \Else{
\nl\Return the union of $\wspdtwo(u,w)$ for all children
$u$ of $v$}
\caption{
 Computing a well-separated pair decomposition from a
compressed quadtree $T$. We scale the input such that the distance between the 
closest pair of points is $c+2$. This guarantees that when $v$ and $w$ are
both leaves, $\wspdtwo(v,w)$ returns $\{v,w\}$.}
\label{alg:wspd}
\end{algorithm}

\begin{proof}
First, we run the usual algorithm for finding a $c$-well-separated
pair decomposition on $T$~\cite{CallahanKo95};
see Algorithm~\ref{alg:wspd} for pseudocode. It is well
known~\cite{LofflerMu12} that the algorithm runs in $O(n)$ time and returns a
set $W$ of $O(n)$
pairs $\{v,w\}$ of nodes in $T$ such that
\begin{enumerate}[(a)]
\item for each two distinct sites $p$, $q$,
 there is exactly one
$\{v,w\} \in W$ with $q \in \sigma_v$, $p \in \sigma_w$;
\item for each $\{v,w\} \in W$, we have
$c\cdot \max\{\diam(v), \diam(w)\} \leq
d(v, w)$;
\item for every call $\wspdtwo(v,w)$,
$\max\{\diam(v), \diam(w)\} \leq
\min \{\diam(\overline{v}), \diam(\overline{w})\}$,
where
$\overline{v}$, $\overline{w}$ are the parents of $v$ and
$w$ in $T$;
\end{enumerate}

In particular, note that since we scaled $P$ such that the closest
pair has distance $c +2$, (b) is satisfied by any pair of (non-empty) cells
of $\Q_0$.

For each pair $\{v,w\} \in W$, we insert two nodes $v'$ and $w'$
into $T$ such that $\diam(v') = \diam(w')$
and such that $d(v',w')$ is approximately $c\cdot \diam(v')$.
Suppose that $\{v,w\}$ was generated through a call $\wspdtwo(v,\overline{w})$
 in Algorithm~\ref{alg:wspd} (the case that $\{v, w\}$ was generated through 
the call $\wspdtwo(\overline{v},w)$ is similar).
Let $r' = \min\{d(v,w)/c, \diam(\overline{w})\}$ and let $r$ be equal to $r'$
rounded down to the highest power of $2$.

\noindent
Observe that
\begin{equation}\label{equ:r_ub}
  r \leq \diam(\overline{w}) \leq \diam(\overline{v}),
\end{equation}
because $r \leq \diam(\overline{w})$ by definition, and
$\diam(\overline{w}) \leq \diam(\overline{v})$
by (c) and
our assumption that $\wspdtwo(v,\overline{w})$ was called.

\noindent
Furthermore, we have
\begin{equation}\label{equ:r_lb}
\max\{\diam(v), \diam(w)\} \le  r  .
\end{equation}
This follows from (c) if $r'=\diam(\overline{w})$ and from (b) if $r'=d(v, w)/c$ (recall 
that $\diam(v)$ and $\diam(w)$ are powers of two).

It follows from
 (\ref{equ:r_ub}) and (\ref{equ:r_lb}) that we can insert nodes $v'$ and $w'$ into $T$
between $v$ and $\overline{v}$ and between $w$ and $\overline{w}$, respectively,
such that $\diam(v') = \diam(w') = r$ and such
that
$\sigma_v \subseteq \sigma_{v'} \subseteq \sigma_{\overline{v}}$
and
$\sigma_w \subseteq \sigma_{w'} \subseteq \sigma_{\overline{w}}$.

We insert all these new nodes into $T$ efficiently by partitioning them according to
the parent-child pair in $T$ that they should be
inserted between.
We sort all the new nodes $x$ that should be inserted between each particular
parent-child pair $\overline{v},v$ by decreasing diameter and remove ``duplicate
nodes'': That is among each group of nodes of the same diameter we leave only one. 
Finally, we insert to $T$  a path consisting of the remaining nodes in order, making the first 
node on the path a child of
$\overline{v}$ and the last node on the path a parent of $v$.
It takes $O(n \log n)$ time to insert all the $O(n)$ new nodes.

To find the sets $R_{\sigma}$, we consider each site $p \in P$ and
 we identify
the nodes $v$ in $T$ such that $p \in R_{\sigma_v}$
in $O(\log n)$ time as follows. Since $c > 5$ there are
at most two integers $i$ such that
$r_p \in [c-2, 2(c+1))2^i$. For each such $i$, we identify (in $O(1)$ time) the
cell $\sigma \in \Q_i$ containing $p$
and then determine whether $\sigma$ is associated with a node $v$ in $T$.
The latter step requires  $O(\log n)$
time with an appropriate data structure. If indeed there is such a node $v$ we insert $p$ into
$R_{\sigma_v}$.
Thus, the total time we spend to find all sets $R_{\sigma}$ is $O(n \log n)$.
We compute the pairs in $N$ similarly also in
 $O(n \log n)$ time.

We now argue that this construction yields a $c$-separated annulus
decomposition for $P$. Property (i) of
Definition~\ref{def:decomposition} holds by construction.
To prove that Property (ii) of Definition
\ref{def:decomposition} holds consider some edge $pq$ in $G$.

Since $W$ is a $c$-WSPD, by (a) there
is a pair $\{v, w\} \in W$ with $q \in \sigma_v$ and $p \in \sigma_w$.
Suppose that $\{v, w\}$ was generated through the call $\wspdtwo(v,\overline{w})$.
Thus, we must have inserted nodes $v'$ and $w'$ into $T$ with
$\sigma_v \subseteq \sigma_{v'} \subseteq \sigma_{\overline{v}}$,
$\sigma_w \subseteq \sigma_{w'} \subseteq \sigma_{\overline{w}}$,
and with $\diam(v') = \diam(w') = r$.
Hence, $q \in \sigma_{v'}$ and $p \in \sigma_{w'}$.

We claim that $(\sigma_{v'},\sigma_{w'}) \in N$.
To prove this claim
 observe that since $r \leq d(v, w)/c$ it follows that
\begin{equation}\label{equ:dvw_lb}
  d(v', w') \geq d(v, w) - 2r \geq cr - 2r = (c-2)\diam(v'),
\end{equation}
Furthermore, if $r'=d(v,w)/c$, then
$d(v,w)/2c < r \leq d(v,w)/c$ and therefore
\begin{equation}\label{equ:dvw_ub1}
  d(v',w') \leq d(v, w) \leq 2c r.
\end{equation}

Since $\{v, w\}$ was generated through a call
$\wspdtwo(v,\overline{w})$ we know that $d(v, \overline{w}) \le c\diam(\overline{w}) $.
So if $r' = \diam(\overline{w})$ (implying $r=r'=\diam{(w')}=\diam(v')$) then we have
\begin{equation}\label{equ:dvw_ub2}
d(v',w') \leq d(v, \overline{w}) + \diam(v') \leq
(c + 1)r \leq  2cr .
\end{equation}

By (\ref{equ:dvw_lb}),(\ref{equ:dvw_ub1}) and (\ref{equ:dvw_ub2}),
we get $(\sigma_{v'},\sigma_{w'}) \in N$.
Finally, since $pq$ is an edge of $G$,
we have $r_p \geq d(v',w') \geq (c-2)\diam(w')$, by~(\ref{equ:dvw_lb}).
If $r_p < (c+1)\diam(w')$, then $p \in R_{\sigma_{w'}}$.
Otherwise let $m$ be the site in $\sigma_{w'} \cap P$ with the largest radius.
Then, $r_m \geq r_p \geq (c+1)\diam(w')$, so $D(m)$ contains
$\sigma_{v'}$ and thus $q$. This establishes Property (ii) of
Definition~\ref{def:decomposition}.
\end{proof}

\subparagraph*{Computing the Edges of $H$.}
As already mentioned, to construct the spanner $H \subseteq G$ for a
stretch factor $t > 1$,
we choose appropriate constants $k = k(t)$ and
$c = c(t)$, scale
 $P$ such
that the closest pair has distance $c+2$, and compute a compressed
quadtree $T$ for $P$. To obtain a
$c$-separated annulus decomposition
$(\Q, N, R_{\sigma})$ for $G$,
we augment $T$ with $O(n)$ nodes as described in the proof of Lemma~\ref{lem:augmentingwithwspd}.

We select the spanner edges for each cone  $C \in \C$ separately, as follows.
For each leaf $v$ of $T$,
we create a dynamic nearest neighbor (NN)
data structure $S_v$ as in Theorem~\ref{thm:chandynamicNN} containing initially the single point 
$p\in \sigma_v\cap P$.  We call a site $p$  \emph{active} if $p \in S_v$
for some node $v$ in $T$. So
initially, all sites of $P$ are active.
Then we process the nodes of $T$ in order of increasing diameter similarly to
Algorithm~\ref{alg:efficientedgeselection} of
Section~\ref{sec:spanner}.

Let $w$ be the child of $v$ such that $|S_w|$ is largest.
We generate $S_v$ from $S_w$ by inserting
into $S_w$ all the active sites of the children of $v$ other than $w$
 (we call this the \emph{preproccesing} step at $v$).
Then we use $S_v$ to do the edge selection for all $\tau \in N(\sigma_v)$
contained in $C_{\sigma_v}^2$; see
Algorithm~\ref{alg:NNedgeselection}. We take a site
$r \in R = R_{\tau} \cup \{m_{\tau}\}$ and repeatedly query $S_v$ for the site closest to $r$.
Let $q$ be the result. If $rq$ is an
edge in $G$, we add $rq$ to $H$, delete
$q$ from $S_v$, and do another query with $r$.
Otherwise, we continue with the next site of $R$,
until all of $R$ is processed. (This step is called the \emph{edge selection} step at $v$.)

\LinesNumbered
\begin{algorithm}[htb]
\tcp{preproccesing}
\label{line:NNpreproccesing}
 Let $w$ be the child of $v$ whose $S_w$
contains the most sites\\
Insert all active sites of each child $w' \neq w$ of $v$ into $S_w$\\
Set $S_v \gets S_w$\\
\ForEach{$\tau \in N(\sigma_v)$ \textnormal{contained in } $C_{\sigma_v}^2$}{
  \ForEach{$r \in R =  R_{\tau} \cup \{m_{\tau}\}$}{
    \tcp{edge selection}
    $q \gets \NN(v,r)$  \tcp{query $S_v$ with $r$}
    \label{line:firstNNquery}
    \While{$q \in D(r)$ \textnormal{and} $q \neq \emptyset$}{
    \label{line:NNwhile}
      add the edge $rq$ to $H$; delete $q$ from $S_v$; $q \gets \NN(v,r)$
    }
  }
  reinsert all deleted sites into $S_v$
}
delete all $q$ from $S_v$ for which at least one edge
$rq$ was found
\caption{Selecting incoming edges for the sites of a node $v$ and
 a cone $C$.}
\label{alg:NNedgeselection}
\end{algorithm}

The edges selected by Algorithm~\ref{alg:NNedgeselection}  have the
same properties as the edges selected by Algorithm~\ref{alg:edgeselection}.
Thus, by Lemma~\ref{lem:edgeapproximation} we obtain a $t$-spanner $H$.
Next, we analyze the running time.
\begin{lemma}
 \label{lem:runninglog6n}
 Algorithm~\ref{alg:NNedgeselection} has a total running time of
$O(n\log^5 n)$ and it requires  $O(n)$ space.
\end{lemma}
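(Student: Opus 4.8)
The plan is to bound the total running time by separately counting the four kinds of operations that Algorithm~\ref{alg:NNedgeselection} performs on Chan's structure---the preprocessing insertions (the merges), the edge-selection queries, the temporary deletions and reinsertions inside the \textbf{while} loop, and the final deactivation deletions---and then charging each to the $O(\log^3 n)$, $O(\log^2 n)$, and $O(\log^6 n)$ bounds of Theorem~\ref{thm:chandynamicNN}. Throughout I fix one cone $C$ and multiply the final bound by $k = O(1)$. The combinatorial backbone is that after the augmentation of Lemma~\ref{lem:augmentingwithwspd} the tree $T$ has $O(n)$ nodes, that $|N(\sigma)| = O(c^2) = O(1)$ while each $\tau$ lies in $O(c^2)$ neighborhoods (Lemma~\ref{lem:volume}), and that each site lies in $O(1)$ sets $R_\tau$. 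These facts imply that the number of triples $(v,\tau,r)$ processed in the edge-selection loop is $\sum_{(v,\tau)} (|R_\tau|+1) = O(n)$, since summing $|R_\tau|+1$ over the $O(1)$ cells $\sigma_v$ owning each $\tau$, and then over all $\tau$, uses $\sum_\tau |R_\tau| = O(n)$.

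Next I would count queries and deletions in terms of the number of edges produced. For a fixed $\tau$ the deletions inside the \textbf{while} loop ensure that every active site $q\in\sigma_v$ receives at most one edge from $R_\tau\cup\{m_\tau\}$, and the final deactivation removes $q$ once it has any incoming edge; hence $q$ collects at most $|N(\sigma_v)| = O(1)$ edges in total, at the single node where it first becomes covered. Summing over sites, the spanner receives $O(n)$ edges in this cone, matching the $O(n)$-edge bound already established for $H$. Each edge accounts for exactly one NN-query, one temporary deletion, and one reinsertion; in addition each triple $(v,\tau,r)$ incurs one ``failing'' query that terminates its \textbf{while} loop, and each site is deactivated once. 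Therefore the number of queries, of \textbf{while}-loop deletions, of reinsertions, and of final deletions are all $O(n)$ per cone, contributing $O(n\log^2 n)$, $O(n\log^6 n)$, $O(n\log^3 n)$, and $O(n\log^6 n)$, respectively.

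The remaining, and technically most delicate, term is the number of preprocessing insertions caused by the merges, where each node's structure is built by inserting the active sites of all children into the largest child's structure $S_w$. I would bound these by the standard union-by-size (``small-to-large'') argument: in a run that \emph{never} deactivates any site, an insertion of $p$ at a node at least doubles the size of the structure containing $p$, so $p$ is inserted $O(\log n)$ times and the total is $O(n\log n)$, independently of the (possibly linear) depth of the compressed quadtree. The obstacle is that the actual algorithm \emph{does} delete sites, which could shrink a structure between two insertions of the same site and seemingly destroy the doubling bound. I resolve this by comparing with the no-deactivation run: at every node, if $b_i\le a_i$ are the child-structure sizes in the real run and in the no-deactivation run (the inclusion holds because deactivation only removes sites, so the active set in each subtree is a subset), then the number of insertions at that node satisfies $\sum_i b_i-\max_i b_i\le\sum_{i\ne p}b_i\le\sum_{i\ne p}a_i=\sum_i a_i-\max_i a_i$, where $p$ indexes the largest child in the no-deactivation run. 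Hence the real run makes no more insertions than the no-deactivation run at each node, so the total is still $O(n\log n)$; at $O(\log^3 n)$ per insertion this is $O(n\log^4 n)$.

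Adding the contributions, the bottleneck is the $O(n)$ deletions at $O(\log^6 n)$ each, giving total time $O(n\log^6 n)$ over all $O(1)$ cones. For space, I would observe that at every moment the active sites are partitioned among the live nearest-neighbor structures, so they store at most $n$ points in total; together with the $O(n)$-size compressed quadtree and the per-point overhead of the structure of Theorem~\ref{thm:chandynamicNN}, the space is $O(n\log\log n)$. I expect the small-to-large insertion count---and specifically arguing that deactivations cannot increase it---to be the main point needing care; everything else is a direct charging to the three per-operation costs.
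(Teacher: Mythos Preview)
Your proof is correct and follows essentially the same decomposition as the paper's: bound the small-to-large merges by $O(n\log n)$ insertions at $O(\log^3 n)$ each, bound the first NN query per triple $(v,\tau,r)$ by $O(n)$ queries total, and charge the remaining queries, deletions, and reinsertions to the $O(n)$ spanner edges, with deletions at $O(\log^6 n)$ dominating. You are in fact more careful than the paper on the merge bound: the paper simply asserts that each insertion doubles the size of the containing structure without addressing the effect of intermediate deactivations, whereas your comparison with the no-deactivation run via $\sum_i b_i-\max_i b_i\le\sum_{i\ne p}b_i\le\sum_{i\ne p}a_i$ cleanly closes that potential gap.
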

\begin{proof}
It takes $O(n\log n)$ to compute the compressed quadtree and to
find the neighboring pairs as in Lemma~\ref{lem:augmentingwithwspd}.
Initializing the nearest neighbor structures $S_v$ at the leaves $v$ takes $O(n)$ time.

Consider now the preprocessing phases at internal nodes $v$. That is the construction of
$S_v$ from  $S_w$ where $w$ is a child of $v$, by inserting into it the active sites from
structures $S_{w'}$ from the children $w' \neq w$ of $v$.
Since $S_w$ is the largest structure among the structures of the children of $v$, 
each time a site is inserted, the
size of the nearest neighbor structure that contains it increases by a factor
of at least two. Thus, each site is inserted $O(\log n)$ times.
By Theorem~\ref{thm:chandynamicNN} each  such insertion takes
 $O(\log^3 n)$ time. So the total time it takes to perform all these insertions
is $O(n\log^4 n)$.

For the edge selection,
consider two nodes $v$ and $w$ in $T$ whose cells are neighbors.
For each site $r$ in $R = R_{\sigma_w} \cup m_{\sigma_w}$,
we perform one nearest neighbor query at
line~\ref{line:firstNNquery} of Algorithm~\ref{alg:NNedgeselection} (the initial 
query with $r$).
We now evaluate what is the total time spent performing these \emph{initial queries}.

By Lemma~\ref{lem:volume} each cell  has
$O(c^2)$ neighbors so
each site $m_{\sigma_w}$ generates $O(c^2)$ queries. The total number of
sites $m_{\sigma_w}$ is equal to the number of nodes in $T$, which is $O(n)$.
Therefore the total number of initial nearest neighbor queries generated by sites $m_{\sigma_w}$ is $O(n)$.

Each site is assigned to $R_{\sigma_w}$ for at most two nodes $w$
and may generate $O(c^2)$ nearest neighbor queries when we process the neighboring cells of each such cell
${\sigma_w}$. Therefore the total number of initial nearest neighbor queries generated by sites in
sets $R_{\sigma_w}$ is also $O(n)$.

By Theorem~\ref{thm:chandynamicNN} the time it takes to perform a query is
$O(\log^2n)$ so the total time spent by initial queries is $O(n\log^2 n)$.

For each edge that we create in the while loop
of line~\ref{line:NNwhile},
we perform at most two deletions, one insertion and one
additional nearest neighbor query.
Since $H$ has $O(n)$ edges,
the total time required to perform these operations   is $O(n\log^5 n)$ 
by Theorem~\ref{thm:chandynamicNN}.

The total size of the compressed quadtree and of
the associated data structures is $O(n)$. Furthermore,
a dynamic nearest neighbor structure with $m$ elements
requires $O(m)$ space~\cite{Chan10}.
Thus, since at any time each site lies in at most one
dynamic nearest neighbor structure, the total space
requirement is $O(n)$.
\end{proof}

We conclude this section with the following theorem
that follows from  Lemma~\ref{lem:runninglog6n} and the discussion preceding it.

\begin{theorem}
\label{thm:2dspannerNN}
  Let $P \subset \R^2$ be an $n$-point set.
  For any $t > 1$, we can compute
  a $t$-spanner for the transmission graph $G$ of $P$ 
  in  $O(n\log^5 n)$ time
  and  $O(n)$ space.
\end{theorem}

\section{Applications}
We present two applications of our spanner construction.
We show how to use it to compute a breadth first search (BFS) tree 
from a particular vertex in a transmission graph, and
we show how to use it to extend a given reachability  data 
structure for additional queries specific to transmission
graphs. In both applications, we need to represent
the union of a set of disks in the plane (in our case these 
are the disks $D(p)$ for $p\in P$). It is well-known
that the boundary of this union has linear 
complexity~\cite{KedemLiPaSh86}. To represent
it algorithmically, we use the \emph{power diagram}, 
which is a weighted version of the Voronoi Diagram.
More specifically, the \emph{power distance} between a point $q$, 
and a disk with center $p$ and radius $r$, is $(d(p,r))^2 - r^2$. 
The power diagram partitions the plane into $n$ regions, such that all 
points in a specific region have the same closest disk in power distance.
The power diagram of a set of $n$ disks is of size $O(n)$ and 
can be constructed in $O(n\log n)$ time. If the power diagram 
is augmented with a point 
location structure, we can locate the disk $D$ that minimizes 
the power distance from a query point $q$ in $O(\log n)$ time. 
In particular we can determine in $O(\log n)$ time if $q$ is in 
the union of the disks  
by checking if $q\in D$~\cite{ImaiEtAl85,Kirckpatrick83}.

\subsection{From Spanners to BFS Trees}\label{sec:bfstree}

We show how to compute the BFS tree in a transmission graph $G$
from a given root $s \in P$ using the
spanner constructions from the previous section.
We adapt a technique that
Cabello and Jej\^ci\^c developed for unit-disk
graphs~\cite{CabelloJejcic15}.
Denote by $d_h(s,p)$ the BFS distance (also known as hop distance)
from $s$ to $p$ in $G$. Let $W_i \subseteq P$ be the sites $p\in P$ with $d_h(s,p) = i$.
Cabello and Jej\^ci\^c
used the Delaunay triangulation (DT) to efficiently
identify $W_{i+1}$, given $W_0, \dots, W_i$. We use
our $t$-spanner in a similar manner for transmission graphs.

\begin{lemma}
\label{lem:bfspathinspanner}
Let $t$ be small enough,  and
let $H$ be the $t$-spanner for $G$ as in
Theorem~\ref{thm:2dspannerSpread}, \ref{thm:2dspanner} or~\ref{thm:2dspannerNN}.
Let $v \in W_{i+1}$, for some $i \geq 1$.
Then, there is a site $u \in W_i$ and a path
$u = q_\ell,\dots,q_1 = v$ in $H$ with $d_h(s,q_j) = i+1$ for $j = 1, \dots, \ell$.
\end{lemma}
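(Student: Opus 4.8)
The plan is to reduce the statement to a claim about a single edge of $G$ and then prove that claim by induction on edge length, in the same spirit as the proof of Lemma~\ref{lem:edgeapproximation}. Since $v \in W_{i+1}$, its BFS predecessor supplies a site $p \in W_i$ with an edge $pv$ in $G$. So it suffices to prove the following: for every edge $pv$ of $G$ with $d_h(s,p)=i$ and $d_h(s,v)=i+1$, there is a path $u=q_l,\dots,q_1=v$ in $H$ with $u \in W_i$ and $d_h(s,q_j)=i+1$ for $j=l-1,\dots,1$. I would fix $i$ and induct on the rank of $pv$ when the edges of $G$ are ordered by length.

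For the inductive step I would apply Lemma~\ref{lem:shorteredge} to $pv$. If $pv \in H$, the length-two path $p,v$ already works, with $u=p \in W_i$. Otherwise Lemma~\ref{lem:shorteredge} yields an edge $rv \in H$ with $|pr| \le |pv|-|rv|/t$; in particular $r \ne p$ and $|pr|<|pv|$ strictly. Exactly as in Lemma~\ref{lem:edgeapproximation}, the chain $|pr|\le|pv|\le r_p$ (the last inequality because $pv$ is an edge of $G$) shows that $pr$ is itself an edge of $G$.

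The heart of the argument is to locate $r$ in the BFS layering. From the edge $pr$ of $G$ with $p \in W_i$ we get $d_h(s,r)\le i+1$, and from the edge $rv$ of $G$ with $v \in W_{i+1}$ we get $d_h(s,r)\ge i$; hence $r \in W_i \cup W_{i+1}$. If $r \in W_i$, then the single edge $rv \in H$ gives the path $r,v$ with $u=r$, and we are done. If $r \in W_{i+1}$, then $pr$ is an edge of $G$ with $d_h(s,p)=i$, $d_h(s,r)=i+1$, and strictly smaller rank than $pv$, so the induction hypothesis supplies a path $u,\dots,r$ in $H$ from some $u \in W_i$ all of whose vertices after $u$ lie in $W_{i+1}$; appending the edge $rv \in H$ keeps every vertex after $u$ in $W_{i+1}$ and reaches $v$, as required. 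No separate base case is needed: strictness $|pr|<|pv|$ forces termination, and the minimal-length such edge necessarily falls into the ``$pv \in H$'' or ``$r \in W_i$'' branch.

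I expect the main obstacle to be the hop-distance bookkeeping for the intermediate site $r$, namely verifying that $r$ cannot escape the two layers $W_i$ and $W_{i+1}$. This is precisely where both edges incident to $r$ are used: the incoming edge $pr$ (which exists only because Lemma~\ref{lem:shorteredge} forces $pr$ to be an edge of $G$) caps $d_h(s,r)$ from above, while the outgoing edge $rv \in H \subseteq G$ bounds it from below. Everything else is routine, and the constants hidden in ``$t$ small enough'' are only those already required for Lemma~\ref{lem:shorteredge} to apply.
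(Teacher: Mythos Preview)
Your approach is essentially the paper's: both repeatedly apply Lemma~\ref{lem:shorteredge} to produce sites strictly closer to the fixed $W_i$-vertex $p$ (the paper calls it $w$), all trapped in $W_i \cup W_{i+1}$ by exactly the two-sided hop-distance argument you give. The paper phrases this as an explicit backward path construction with the invariant $|wq_{k+1}| < |wq_k|$; you phrase it as induction on the rank of $|pv|$. These are the same argument.

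There is, however, one real gap. The lemma is stated for all three spanner constructions, and for the $\Psi$-bounded spanner of Theorem~\ref{thm:2dspanner}, Lemma~\ref{lem:shorteredge} does \emph{not} apply to every edge of $G$: an edge $pq_k$ whose level-$0$ cells are at distance less than $c-2$ is handled not by the (only partial) annulus decomposition but by a separate Euclidean clique spanner (see Section~\ref{sec:spannerPsi} and the proof of Lemma~\ref{lem:edgeapproximation_2}). Your induction can land on such an edge with $pq_k \notin H$ and no single edge $rq_k \in H$ promised by Lemma~\ref{lem:shorteredge}. The paper treats this as a separate Case~1: it takes the Euclidean-spanner path $\pi$ from $p$ to $q_k$, observes that every site on $\pi$ lies in the clique and hence in $W_i \cup W_{i+1}$, and sets $u$ to be the last site on $\pi$ with $d_h(s,u)=i$. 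You need to add this case (or else argue that the auxiliary Euclidean spanner can be chosen, e.g.\ as a Yao graph, so that an analogue of Lemma~\ref{lem:shorteredge} holds for it as well).
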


\begin{proof}
We focus on the spanner from Theorem~\ref{thm:2dspanner}, since it has
the most complicated structure. The proof for the other constructions is similar and simpler.

Since $v \in W_{i+1}$, there is a $w \in W_i$ with $v \in D(w)$.
If $H$ contains the edge $wv$,
the claim follows by setting $u = q_2= w$ and $q_1=v$.
Otherwise, we construct the path backwards from $v$ (see
Figure~\ref{fig:bfspathinspanner}).
Suppose we have already constructed a sequence
$v = q_1, q_2, \dots, q_k$
of sites in $P$ such that
(i) for $j = 1, \dots, k-1$, $q_{j+1}q_j$ is an edge of
$H$; (ii) for $j = 1, \dots k$, we have $q_j \in D(w)$ and
$d_h(s, q_j) = i+1$; and
(iii) for $j = 1, \dots, k-1$, $|wq_{j+1}| < |wq_{j}|$.
We begin with the sequence $q_1 = v$ satisfying the invariant.

\begin{figure}[htb]
\begin{center}
\includegraphics[scale=0.8]{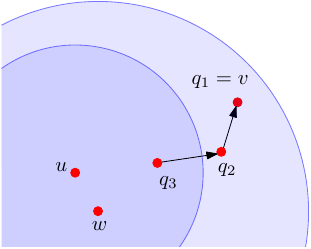}
\end{center}
\caption{The partial path constructed backwards from $v$. Setting $q_4=u$ will
complete it.}
\label{fig:bfspathinspanner}
\end{figure}

Let $c$ be the constant from the spanner construction of
Section~\ref{sec:spannerPsi}, and
recall that we scale $P$ such that the smallest radius is $c$.
Suppose that we have $q_{1},\dots, q_{k}$ and that $wq_k$
is not an edge of $H$ (otherwise  we could finish by setting $u = w$).
Let $\sigma, \tau \in \Q_0$ be the cells such that $w \in \tau$ and
$q_k \in \sigma$. We distinguish two cases, depending on
$d(\sigma,\tau)$, and we either show how to find $u$
to complete the path from $u$ to $v$ or how to choose $q_{k+1}$.

Case 1: $d(\sigma,\tau) < c-2$.
Let $Q = (P \cap \sigma) \cup (P \cap \tau)$.
We have that  $w, q_k \in Q$.
The algorithm of Section~\ref{sec:spannerPsi}
constructs a Euclidean spanner for $Q$ and adds its edges to $H$.
In particular, there is a directed path $\pi$ from $w$ to $q_k$
that uses
only sites of $Q$. By construction, the pairwise distances between the sites of $Q$ are all
at most $c$. Thus, for each $p \in Q$ we have $p \in D(w)$ and
$q_k \in D(p)$,
and therefore $i \leq d_h(s,p) \leq i+1$.
We set  $u$ be the last site of $\pi$
with $d_h(s,u) = i$.
To obtain the desired path from $u$ to $v$
we take the subpath of $\pi$ starting at $u$ and concatenate it to the
the partial path $q_k,\dots,q_1=v$.

Case 2: $d(\sigma,\tau) \geq c-2$.
Since $wq_k$ is not an edge of $H$, by Lemma~\ref{lem:shorteredge}
there exists an edge
$rq_k$ in $H$ with $|wr| < |wq_k|$. We
set $q_{k+1} = r$.
Since $q_k \in D(w)$, we have $q_{k+1} \in D(w)$
and $i \leq d_h(s,q_{k+1}) \leq i+1$. If $d_h(s,q_k) = i$, we set
$u = q_{k+1}$ and are done. Otherwise, $q_{k+1}$
satisfies properties (i)--(iii) and we continue to extend the path.

Since the distance to $w$ decreases in each step and
since $P$ is finite, this process eventually stops and the lemma follows.
\end{proof}

\LinesNumbered
\begin{algorithm}[hbt]
$W_0 \leftarrow \{s\}$; $\dist[s]=0$; $\pi[s]=s$; $i = 0$;
and, for $p \in P \setminus \{s\}$, $\dist[p] = \infty$ and $\pi[p] = \NIL$\\
\While{$W_i \neq \emptyset$} {
  compute power diagram with point location
  structure $\PD_i$ of $W_i$\\
  queue $Q \leftarrow W_i$
  \label{step:fillingQ} ; $W_{i+1} \leftarrow \emptyset$ \\
  \While{$Q \neq \emptyset$} {
      $p\gets \text{dequeue}(Q)$\\
      \ForEach{\textnormal{edge} $pq$ of $H$} {
        \label{step:forloop}
        $u \gets \PD_i(q)$ \tcp{query $\PD_i$ with $q$, $D(u)$ minimizes the power distance from $q$}
        \label{step:querypd}
        \If{$q \in D(u)$ \textnormal{and} $\dist[q] = \infty$} {
          \label{step:validvertices}
          $\text{enqueue}(Q,q)$; $\dist[q] = i+1$;
          $\pi[q] = u$; add $q$ to $W_{i+1}$
          \label{step:endif}
        }
      }
  }
   $i \gets i+1$
}
\caption{Computing the BFS tree for $G$ with root $s$ using the spanner $H$.}
\label{alg:bfstree}
\end{algorithm}

The BFS tree for $s$ is computed iteratively;
see Algorithm~\ref{alg:bfstree} for pseudocode.
Initially, we set $W_0 = \{s\}$.
Now assume we have computed $W_0,\ldots,W_i$.
By Lemma~\ref{lem:bfspathinspanner}, all sites in $W_{i+1}$
can be reached from $W_i$ in the subgraph of $H$
induced by $W_i \cup W_{i+1}$.
Thus, we can compute  $W_{i+1}$ by running  a BFS search in $H$ from the points of $W_i$ using a queue $Q$. Every time
we encounter a new vertex $q$, we check if it lies in
a disk around a site of $W_i$, and is not yet in the BFS tree for $s$.
If so, we add $q$ to $W_{i+1}$ and to $Q$.
 Otherwise, we discard $q$.
To test whether $q$ lies in a disk of $W_i$, we compute a power diagram for
$W_i$ in time $O(|W_i|\log |W_i|)$ and query it with $q$.

A site $p$ at level $i$ is traversed by at most two BFS searches in $H$. In the first search we discover that $p$
is in $W_i$, and in the second search $p$ is a starting point --- this is  the search to discover  $W_{i+1}$.
It follows that an edge $pq$ of $H$ is considered twice
by Algorithm~\ref{alg:bfstree}. Each time we consider the edge
$pq$ we spend $O(\log n)$ time for  querying a power diagram with $q$.
 Since $H$ is sparse, the total time required is
$O(n\log n)$. This establishes the following theorem.
\begin{theorem}
Let $P \subset \R^2$ be a set of $n$ points.
 Given a spanner $H$ for the transmission graph $G$ of $P$ 
 as in Theorem~\ref{thm:2dspannerSpread},
 Theorem~\ref{thm:2dspanner}, or Theorem~\ref{thm:2dspannerNN}, we can
 compute in $O(n\log n)$ additional time a BFS tree in $G$ 
 rooted at any given site $s \in P$.
\end{theorem}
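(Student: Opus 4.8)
The plan is to compute the BFS layers $W_0, W_1, W_2, \dots$ one at a time, exactly as in Algorithm~\ref{alg:bfstree}, and to argue separately for correctness and for the $O(n\log n)$ additional running time. I would initialize $W_0 = \{s\}$ and maintain, for each site, a tentative hop distance $\dist[\cdot]$ together with a BFS parent $\pi[\cdot]$. Given the layers $W_0, \dots, W_i$, the core idea is to recover $W_{i+1}$ by running a breadth-first search \emph{inside the spanner} $H$ that starts from all sites of $W_i$, filtering the vertices it encounters through a disk-membership test: a newly reached site $q$ is accepted into $W_{i+1}$ precisely when it lies in the disk $D(u)$ of some $u \in W_i$ and has not already been assigned a smaller hop distance. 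To perform this test efficiently I would build, for each layer, the power diagram $\PD_i$ of the sites in $W_i$ augmented with a point-location structure, so that each query ``does $q$ lie in a disk of a site of $W_i$?'' costs $O(\log n)$, as recalled at the start of this section.

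For correctness I would appeal directly to Lemma~\ref{lem:bfspathinspanner}. Fix $v \in W_{i+1}$ with $i \geq 1$. The lemma supplies a site $u \in W_i$ and a directed path $u = q_\ell \to \dots \to q_1 = v$ in $H$ all of whose intermediate vertices $q_{\ell-1}, \dots, q_1$ lie in $W_{i+1}$. Since the search begins with every site of $W_i$ in the queue, it starts in particular from $u$; and because each $q_j$ on the path belongs to $W_{i+1}$, it lies in some disk $D(w)$ with $w \in W_i$, so the query to $\PD_i$ succeeds, $q_j$ is accepted, enqueued, and its outgoing spanner edges are explored in turn. Following the edges $q_{j+1}q_j$ of $H$, the search therefore reaches every vertex of the path and in particular $v$, so no site of $W_{i+1}$ is missed. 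Conversely, any site accepted at this stage lies in a disk of $W_i$ and hence has hop distance at most $i+1$; since it carried no smaller distance, its hop distance is exactly $i+1$, so no spurious site is added. Thus the computed layer equals $W_{i+1}$, the base layer $W_1$ of out-neighbors of $s$ being recovered by the same mechanism applied with $W_0 = \{s\}$. Induction on $i$ then shows that Algorithm~\ref{alg:bfstree} computes the true BFS layers, and the parent pointers $\pi[\cdot]$ yield a correct BFS tree.

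For the running time I would bound the two dominant costs separately. First, constructing $\PD_i$ with its point-location structure takes $O(|W_i| \log |W_i|)$ time; since the layers partition the sites reachable from $s$ we have $\sum_i |W_i| \leq n$, and each factor $\log|W_i|$ is at most $\log n$, so the total construction cost is $O(n \log n)$. Second, every edge $pq$ of $H$ is processed at most twice, namely in the level where $p$ is discovered and in $p$'s own level, and each processing performs a single point-location query of cost $O(\log n)$; as $H$ is sparse with $O(n)$ edges, the total query cost is $O(n \log n)$. All remaining bookkeeping (queue operations, merging of sorted layer lists) is linear, so the algorithm runs in the claimed $O(n\log n)$ additional time.

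The main obstacle is the correctness of the layer-recovery step, and it is entirely localized in the guarantee of Lemma~\ref{lem:bfspathinspanner}. A priori, a site of $W_{i+1}$ need not be adjacent in $H$ to any site of $W_i$, so a naive one-hop expansion of $H$ from $W_i$ could miss it. What rescues the approach is that each $v \in W_{i+1}$ is reachable from $W_i$ along a spanner path whose interior stays entirely within $W_{i+1}$; this is exactly the property that lets the breadth-first search inside $H$, gated by the $\PD_i$ membership test, propagate through $W_{i+1}$ and discover all of it. Once this structural fact is in hand, the two-sided inclusion and the running-time accounting are routine.
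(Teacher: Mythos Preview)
Your proposal is correct and follows essentially the same approach as the paper: you invoke Algorithm~\ref{alg:bfstree} verbatim, appeal to Lemma~\ref{lem:bfspathinspanner} for correctness of the layer-recovery step, and give the same running-time accounting (power-diagram construction summed over disjoint layers, each spanner edge processed at most twice with an $O(\log n)$ point-location query). If anything, you are slightly more explicit than the paper, spelling out the two-sided inclusion for $W_{i+1}$ and noting the base case $i=0$, which the paper's statement of Lemma~\ref{lem:bfspathinspanner} formally omits but whose proof goes through unchanged.
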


\subsection{Geometric Reachability Oracles}
\label{sec:oracles}

Let $G$ be a directed graph. If there is a directed path from a vertex $s$
to a vertex $t$ in $G$, we say $s$  \emph{can reach} $t$ (in $G$).
A \emph{reachability oracle} for a graph $G$ is a data structure
that can answer efficiently for any given pair $s$, $t$ of vertices of $G$ whether $s$ can reach $t$.
Reachability oracles  have been studied
extensively over the last decades (see, e.g.,~\cite{Holm2015,Thorup04} and
the references therein).

When $G$ is a transmission graph
we are interested in
 a more
general type of reachability  query where the target $t$ is not necessarily a vertex of $G$,
but an arbitrary point in the plane. We say that a site $s$ can
reach a point $t \in \R^2$ if there is a 
site $q$ in $G$ such that $t \in D(q)$
and such that $s$ can reach $q$ in $G$. We call a data structure that supports this type of
queries a \emph{geometric reachability oracle}. We can use our
spanner construction from Theorem~\ref{thm:2dspanner} to extend any
reachability oracle for a transmission graph to a geometric reachability oracle
with a small overhead in space and query time.
More precisely, we prove the following theorem.
\begin{theorem}
\label{thm:geometricreachability}
Let $P$ be a set of $n$ points in the plane with radius
ratio $\Psi$.
Given a reachability oracle for the transmission graph $G$ of $P$ 
that requires $S(n)$ space and
has query time $Q(n)$, we can obtain in $O(n \log n \log \Psi)$ time a
geometric reachability oracle for $G$ that requires 
$S(n) + O(n \log \Psi)$ space and
can answer a query in $O(Q(n) + \log n \log \Psi)$ time.
\end{theorem}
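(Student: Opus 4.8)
The plan is to reduce a geometric reachability query $(s,t)$---does $t$ lie in the union of the disks $D(q)$ taken over all sites $q$ reachable from $s$?---to a small number of ordinary reachability queries, using the $t$-spanner of Theorem~\ref{thm:2dspanner} as the substrate (a $t$-spanner preserves reachability exactly, so the given oracle for $G$ also answers reachability in $H$). First I would classify the sites by radius scale. After scaling so that the smallest radius is the constant $c$ of Section~\ref{sec:spannerPsi}, I group the sites into $L=O(\log\Psi)$ scales, where scale $j$ collects the sites with $r_q\in[2^j,2^{j+1})$. For each scale $j$ I lay down a grid whose cell diameter is at most $2^j$, hence at most the radius of every scale-$j$ site. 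The key structural fact is a clique property in the spirit of Lemma~\ref{lem:diamPartition} and of the clique handling in Section~\ref{sec:spannerPsi}: any two scale-$j$ sites lying in the same cell, or in two of the $O(1)$ neighboring cells, are within distance $2^j$ of each other, so each lies in the other's disk and they are mutually reachable in $G$. Thus, for reachability purposes, all scale-$j$ covering sites in a local neighborhood of $t$ are interchangeable, and I may pick one representative $m_{j,\sigma}$ per nonempty cell.

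Next I would set up the two query-time primitives. For coverage I build, for each scale $j$, a power diagram with point location on the scale-$j$ sites; this detects in $O(\log n)$ time whether some scale-$j$ site covers a query point and returns such a site. Since each site belongs to a single scale, these diagrams have total complexity $O(n)$ and are built in $O(n\log n)$ time; together with the spanner and the per-scale grids the space is $O(n\log\Psi)$ and the preprocessing is $O(n\log n\log\Psi)$. For reachability I keep the given oracle for $G$. A query $(s,t)$ then proceeds scale by scale: at scale $j$ I point-locate $t$ in the $O(1)$ nearby cells, use the scale-$j$ power diagram to decide whether some scale-$j$ site covers $t$, and if so record the clique representative of that cell. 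By the clique property, ``$s$ reaches a scale-$j$ site covering $t$'' is equivalent to ``$s$ reaches this representative'', so the geometric query is the disjunction, over the $O(\log\Psi)$ scales, of whether $s$ reaches the recorded representative. The coverage work is $O(\log n)$ per scale, hence $O(\log n\log\Psi)$ in total, matching the additive overhead claimed.

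I expect the main obstacle to be collapsing this disjunction into a \emph{single} reachability-oracle call, so that the query cost is $Q(n)+O(\log n\log\Psi)$ rather than $O(\log\Psi)$ separate calls. The difficulty is genuinely cross-scale: two covering sites of very different radii need \emph{not} be mutually reachable, since a large disk need not contain the center of a small covering site, so there is no single fixed target whose reachability from $s$ encodes the whole disjunction. To force one call I would augment the graph fed to the oracle with the $O(n\log\Psi)$ representative vertices, wiring each representative to be reachable-equivalent to its clique and arranging them in a hierarchy across scales so that, at query time, point location identifies one aggregated target $z(t)$ whose reachability from $s$ equals the desired disjunction; the single oracle call is then the query from $s$ to $z(t)$. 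Making this aggregation respect the $t$-dependent coverage condition---rather than merely ``$s$ reaches some nonempty cell''---is the delicate point, and the remaining care is to keep the coverage tests \emph{exact} via the power diagrams, so that the extended oracle answers geometric reachability without approximation. Once the aggregated target $z(t)$ is available, the stated space $S(n)+O(n\log\Psi)$ and query time $Q(n)+O(\log n\log\Psi)$ follow immediately.
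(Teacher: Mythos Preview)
Your reduction produces $O(\log\Psi)$ candidate targets (one representative per radius scale that has a site covering $t$), and you correctly identify that this forces $O(\log\Psi)$ separate oracle calls, i.e.\ query time $O(Q(n)\log\Psi + \log n\log\Psi)$ rather than the claimed $O(Q(n)+\log n\log\Psi)$. The fix you sketch---augmenting the graph with $O(n\log\Psi)$ auxiliary vertices and a hierarchy so that a single aggregated target $z(t)$ encodes the disjunction---cannot work under the hypotheses of the theorem. The reachability oracle is a black box for $G$: you cannot insert vertices into it, and if you instead rebuild the oracle on the augmented graph you would pay $S(n\log\Psi)$ space and the oracle's (unspecified) preprocessing time, neither of which matches the statement. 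Moreover, any static wiring of auxiliary vertices cannot encode the $t$-dependent predicate ``some site in this cell covers $t$'' without adding a vertex at query time.

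The paper avoids this obstacle altogether by producing a candidate set $Q$ of \emph{constant} size, so that $O(1)$ oracle calls suffice. The mechanism is not a radius-scale decomposition but the cone/annulus machinery of the spanner itself: one simulates inserting $t$ as a sink into the spanner and collects, for each of the $k=O(1)$ cones, the $O(c^2)=O(1)$ sites that the spanner's edge-selection rule (Algorithm~\ref{alg:edgeselection}) would pick as sources of incoming edges to $t$. Lemma~\ref{lem:shorteredge}, applied with $t$ in place of $q$, then guarantees that for \emph{every} site $p$ with $t\in D(p)$ there is some $q\in Q$ with $|pq|<|pt|\le r_p$, hence $q\in D(p)$ and $pq$ is an edge of $G$. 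Thus $s$ reaches $t$ geometrically iff $s$ reaches some $q\in Q$, and $|Q|=O(1)$. The point you were missing is precisely this cone argument: within a single cone the ``closest'' covering site lies inside the disk of every farther covering site, so one representative per cone suffices---something a pure radius-scale partition cannot deliver, since covering sites of very different radii need not contain one another.
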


Given a query $s,t$ with a target $t \in \R^2$, our strategy is to find a small
subset $Q \subseteq P$  such that
for each $q \in Q$,  $t\in D(q)$, and $Q$ ``covers the space around $t$'' in the following
sense. For any disk $D(p)$ such that  $t \in D(p)$ there is a site $q
\in Q$ with $q \in D(p)$. In particular the edge $pq$ is in $G$.

Such a set $Q$ satisfies that $s$ can reach $t$ if and only if $s$ can reach some site $q\in Q$.
Once we have computed  $Q$ we
decide whether $s$ can reach $t$ by querying the
given reachability oracle with $s,q$ for all $q \in Q$.
The answer is positive if and only if it is positive for at least
one site $q \in Q$.

In what follows, we construct a data
structure of size  $O(n \log \Psi )$ that allows to find such a set $Q$ of size $O(1)$ in  $O(\log n
\log \Psi)$ time. Theorem~\ref{thm:geometricreachability} is then immediate.

\paragraph*{The Data Structure.}
We compute a $2$-spanner $H$ for $G$ as in Theorem~\ref{thm:2dspanner}. Let $k$ (the number
of cones) and $c$ (the separation parameter) be the
two constants used by the construction of $H$, and recall that we scaled $P$
such that
the smallest radius of a site in $P$ is $c$.
Let $T$ be the quadforest used by the
construction of $H$. The trees in  $T$ have depth $O(\log \Psi)$
and each node $v \in T$ corresponds
to a grid cell $\sigma_v$ from some grid $\Q_i$, $i \geq 0$. Our data structure
is obtained by augmenting each node $v \in T$
by a power diagram $\PD_{\sigma_v}$ for the sites in
$\sigma_v \cap P$, together with a point location data structure. This requires
$O(|\sigma_v \cap P|)$ space and  $O(|\sigma_v \cap P|\log |\sigma_v \cap P|)$
time~\cite{ImaiEtAl85,Kirckpatrick83} for each $v$. Since any site of $P$ is in
$O(\log \Psi)$ cells of $T$, we need $O(n \log\Psi)$ space and $O(n \log n \log
\Psi)$ time in total.
\begin{algorithm}[htb]
$L \gets $ depth of $T$\\
\For{$i=0,\dots, L$} {
  $\sigma \gets $ cell of $\Q_i$ with $t \in \sigma$\\
  \ForEach{$\tau \in N(\sigma)$ \textnormal{contained in}
$C_{\sigma}^2$}{
      $q \gets \PD_\tau(t)$ \tcp{query $\PD_\tau$ with $t$}
      if $t \in D(q)$, add $q$ to $Q$
    }
    Stop if at least one $q$ was added to $Q$
}
\caption{Query Algorithm for a cone $C$ and a point $t$.}
\label{alg:querycoverset}
\end{algorithm}

\paragraph*{Performing a Query.}
Let a query point $t \in \R^2$ be given. Let $\sigma$ be the cell in $\Q_0$ that
contains $t$.
To find $Q$, we first traverse all non-empty cells
$\tau \in \Q_0$ with $d(\sigma,\tau) \leq c - 2$.
From each such cell $\tau$,  if there exists a site $q \in \tau \cap P$ such that
$t \in D(q)$ then we add one, arbitrary, such site to $Q$.
To determine if such a site exists, and to find one if it exists, we query $\PD_\tau$ with $t$.
Second, we go through all cones $C \in \C$, and we run
Algorithm~\ref{alg:querycoverset} with $C$ and $t$ to find the remaining sites
for $Q$.
Algorithm~\ref{alg:querycoverset} is similar to
Algorithms~\ref{alg:edgeselection} and~\ref{alg:efficientedgeselection}, and
computes the incoming edge of $t$ if it would have been inserted into the spanner.
 We go through the grids at all
levels of $T$. For each level we consider the cell
$\sigma$ that contains $t$ and for each cell
$\tau \in N(\sigma)$ that is contained in
$C_{\sigma}^2$ we select a site with an edge to $t$ if there is one.
Lemma~\ref{lem:shorteredge} holds for the incoming edges of $t$ and using
this fact, we can prove
that our data structure has the desired properties.
\begin{lemma}
\label{lem:coverset}
Let $P$ be a set of $n$ points in the plane with radius ratio $\Psi$. We can
construct in $O(n\log n \log \Psi)$ time a data structure that finds for any
given query point $t \in \R^2$ a set $Q \subseteq P$ such that $|Q|= O(1)$ and
for any site $p \in P$, if $t \in D(p)$ we have that $D(p) \cap Q \neq
\emptyset$.
The query time is $O(\log n \log \Psi)$ and the space requirement is $O(n
\log \Psi)$.
\end{lemma}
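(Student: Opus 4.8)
The data structure and the query procedure are exactly the ones described above, so it remains to verify the covering property, the bound $|Q|=O(1)$, and the stated resource bounds. The preprocessing and space bounds are immediate: each site of $P$ lies in exactly one cell per level of the quadforest $T$, and $T$ has depth $O(\log\Psi)$, so the power diagrams stored at the nodes occupy $\sum_{v}O(|\sigma_v\cap P|)=O(n\log\Psi)$ space and are built in $\sum_v O(|\sigma_v\cap P|\log|\sigma_v\cap P|)=O(n\log n\log\Psi)$ time, which dominates the $O(n(\log n+\log\Psi))$ cost of the $2$-spanner of Theorem~\ref{thm:2dspanner}.

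For correctness I would model the query as inserting $t$ into $G$ as a sink of radius $0$, so that the edges entering $t$ are precisely the pairs $pt$ with $t\in D(p)$. The first point to establish is that the partial annulus decomposition built for $P$ already satisfies Definition~\ref{def:decomposition}(ii) for every such edge $pt$ whose level-$0$ cells $\sigma\ni t$ and $\tau\ni p$ are at distance $d(\sigma,\tau)\ge c-2$: the argument is verbatim that of Lemma~\ref{lem:3.9}, since it uses only $r_p\ge|pt|$ (which holds because $t\in D(p)$) and the geometry of the cells, and the cell of $t$ is well defined in every grid even when it contains no site of $P$. Consequently Lemma~\ref{lem:shorteredge} applies to the sink $t$ in the cone $C$ with $p\in C_t$: either the query selects $p$ itself (then $p\in Q$ and trivially $p\in D(p)$), or it selects a site $r$ with $|pr|\le|pt|-|rt|/2$. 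In the latter case $|pr|<|pt|\le r_p$, so $r\in D(p)$, and since $r$ is a site chosen by the query, $r\in Q$. Either way $D(p)\cap Q\neq\emptyset$.

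Two points need care here. First, Algorithm~\ref{alg:querycoverset} picks an in-neighbour of $t$ in a cell $\tau$ by querying the full power diagram $\PD_\tau$ rather than restricting to $R_\tau\cup\{m_\tau\}$ as in Algorithm~\ref{alg:edgeselection}. This is harmless, because every distance bound in the proof of Lemma~\ref{lem:shorteredge} (the same-cell bound of Case~1, and the annulus bound of Lemma~\ref{lem:annulusDiameter} together with the cone bound of Lemma~\ref{lem:lengthofchosenedge} in Case~2) depends only on the level of the cell holding the chosen site and on its separation from the cell of $t$, not on the identity of the site inside that cell. Since the query processes the cells containing $t$ by increasing diameter and halts at the first level that produces an edge, and since Definition~\ref{def:decomposition}(ii) guarantees a non-empty power-diagram answer in the relevant cell at the level determined by $|pt|$, the query halts at a level no larger than that level; replaying the Case~1 and Case~2 computations with $C_t^4$ in place of the cone $C_q^4$ of Lemma~\ref{lem:shorteredge} then shows that \emph{every} site selected at the halting level in the cone $C$ with $p\in C_t$ lies in $D(p)$, so at least one such site is in $Q\cap D(p)$. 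Second, edges $pt$ whose level-$0$ cells satisfy $d(\sigma,\tau)<c-2$ are handled by the initial near-cell phase, which queries $\PD_\tau$ for every cell $\tau$ with $d(\sigma,\tau)\le c-2$; this returns some $q\in\tau$ with $t\in D(q)$ (at least $p$ qualifies), and since $p$ and $q$ lie in level-$0$ cells of diameter $1$ at distance below $c-2$ we have $|pq|<c\le r_p$, hence $q\in D(p)\cap Q$. This split exactly mirrors the one between the annulus decomposition and the clique spanners in Lemma~\ref{lem:edgeapproximation_2}.

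Finally, $|Q|=O(1)$ because the near-cell phase touches $O(c^2)$ cells and adds at most one site each, while for each of the $k$ cones Algorithm~\ref{alg:querycoverset} stops at the first fruitful level and there inspects the $O(c^2)$ cells of $N(\sigma)$ lying in $C_\sigma^2$, adding at most one site per cell; thus $|Q|=O(kc^2)=O(1)$. For the query time, locating the cell of $t$ at a given level is $O(1)$, each of the $O(c^2)$ candidate neighbour cells is looked up in $T$ in $O(\log n)$ time, and each power-diagram query costs $O(\log n)$; the near-cell phase therefore costs $O(\log n)$, and the cone phase costs $O(\log n)$ per level over the $O(\log\Psi)$ levels of $T$, for a total of $O(\log n\log\Psi)$. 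The main obstacle is the correctness argument of the two middle paragraphs: transferring Lemma~\ref{lem:shorteredge} to the inserted sink $t$ and checking that the relaxed power-diagram selection preserves the cell-geometric distance bounds, so that a site selected by the query always falls inside $D(p)$.
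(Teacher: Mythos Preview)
Your proposal is correct and follows essentially the same approach as the paper: treat $t$ as an inserted sink, split according to whether the level-$0$ cells of $t$ and $p$ are near or far, and invoke Lemma~\ref{lem:shorteredge} in the far case to obtain a selected site $q$ with $|pq|<|pt|\le r_p$. You are in fact more careful than the paper on one point the paper glosses over, namely that Algorithm~\ref{alg:querycoverset} queries the full power diagram $\PD_\tau$ rather than restricting to $R_\tau\cup\{m_\tau\}$ as in Algorithm~\ref{alg:edgeselection}; your observation that the distance bounds in Lemma~\ref{lem:shorteredge} depend only on the level and position of the cell, not on which site within it is chosen, is exactly what is needed to justify this.
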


\begin{proof}
The construction time and the space requirement are immediate.
For the query time recall that $T$ has depth $O(\log \Psi)$
and by Lemma~\ref{lem:volume}, at each level we make $O(c^2)$ queries to the
power diagrams. It follows that
 it takes
$O(\log n\log \Psi)$ time to compute $Q$.

By construction, $Q$ has size $O(1)$. Indeed, at the first step, we add at most one
site for every cell of distance at most $c-2$ from $\sigma$, and there are
$O(c^2)$ such cells. In the second step, for each cone, we only add sites from $O(c^2)$ cells at
one level of $T$.

Now let $p \in P$ be a site with $t \in D(p)$. It remains to show that
$D(p) \cap Q \neq \emptyset$. If $p \in Q$, we are done. If not, we let
$\sigma$ and $\tau$ be the cells in $\Q_0$ with $t \in \sigma$ and $p \in
\tau$.
If $d(\sigma,\tau) \leq c - 2$ then  there must be a site $q \in \tau \cap Q$.
Since $\diam(\tau) = 1$ and $r_p \geq c$, we have $q \in D(p)$.
If $d(\sigma,\tau) > c-2$
then since $pt$ is an edge in $G$ that is not selected
by Algorithm~\ref{alg:querycoverset},  Lemma~\ref{lem:shorteredge} guarantees that there is
an edge $qt$ with $q \in Q$ and $|pq| < |pt|$. Since $t \in D(p)$ we also
have $q \in D(p)$. This finishes the proof.
\end{proof}

\section{Conclusion}
We have described the first construction of spanners for
transmissions graphs that runs in near-linear time, and we demonstrated
its usefulness by describing two applications. Our techniques are quite general, and we
expect that they will be applicable in similar settings.
For example, in an ongoing work we consider how to extend our results
to (undirected) disk intersection graphs. This would significantly improve
the bounds of F\"urer and Kasiviswanathan~\cite{FuererKasiviswanathan12}.

Our most general spanner construction requires
a dynamic data structure for planar Euclidean nearest neighbors.
It is an interesting
challenge to find a simpler solution that possibly avoids the need for such a structure.

Finally, we believe that our work indicates that transmission graphs
constitute an interesting and fruitful model of geometric graphs
worthy of further investigation. In a companion paper~\cite{KaplanEtAl15b},
we consider several questions concerning reachability
in transmission graphs. In particular, we describe several
constructions of reachability oracles for transmission graphs 
(see Section~\ref{sec:oracles}),
providing many opportunities to apply 
Theorem~\ref{thm:geometricreachability}.
Also, in this context our spanner construction plays a crucial role
in obtaining fast preprocessing algorithms.

\paragraph{Acknowledgments.}
We like to thank Paz Carmi and G\"unter Rote for valuable comments.
We also thank the anonymous referees for their careful reading of
the paper and for their insightful suggestions, and in particular for
pointing out the problem of geometric reachability queries as described in
Section~\ref{sec:oracles}.

\bibliographystyle{abbrv}
\bibliography{literature}

\newcommand{\SortNoop}[1]{}
\begin{thebibliography}{10}

\bibitem{AfshaniCh09}
P.~Afshani and T.~M. Chan.
\newblock Optimal halfspace range reporting in three dimensions.
\newblock In {\em Proc. 20th Annu. ACM-SIAM Sympos. Discrete Algorithms
  (SODA)}, pages 180--186, 2009.

\bibitem{4M}
M.~{\SortNoop{Berg}}de~Berg, O.~Cheong, M.~van Kreveld, and M.~H. Overmars.
\newblock {\em Computational Geometry: Algorithms and Applications}.
\newblock Springer-Verlag, 3rd edition, 2008.

\bibitem{BoseEtAl12}
P.~Bose, M.~Damian, K.~Dou{\"i}eb, J.~O'Rourke, B.~Seamone, M.~H.~M. Smid, and
  S.~Wuhrer.
\newblock $\pi/2$-angle {Y}ao graphs are spanners.
\newblock {\em Internat. J. Comput. Geom. Appl.}, 22(1):61--82, 2012.

\bibitem{Boukerche08}
A.~Boukerche.
\newblock {\em Algorithms and Protocols for Wireless Sensor Networks}.
\newblock Wiley Series on Parallel and Distributed Computing. Wiley-IEEE Press,
  1st edition, 2008.

\bibitem{CabelloJejcic15}
S.~Cabello and M.~Jej\^ci\^c.
\newblock Shortest paths in intersection graphs of unit disks.
\newblock {\em Comput. Geom.}, 48(4):360--367, 2015.

\bibitem{CallahanKo95}
P.~B. Callahan and S.~R. Kosaraju.
\newblock A decomposition of multidimensional point sets with applications to
  $k$-nearest-neighbors and $n$-body potential fields.
\newblock {\em J. ACM}, 42(1):67--90, 1995.

\bibitem{Carmi14}
P.~Carmi, 2014.
\newblock personal communication.

\bibitem{Chan10}
T.~M. Chan.
\newblock A dynamic data structure for 3-{D} convex hulls and 2-{D} nearest
  neighbor queries.
\newblock {\em J. ACM}, 57(3):Art. 16, 15, 2010.

\bibitem{ChanTsakalidis15}
T.~M. Chan and K.~A. Tsakalidis.
\newblock Optimal deterministic algorithms for 2-d and 3-d shallow cuttings.
\newblock In {\em Proc. 31st Int. Sympos. Comput. Geom. (SoCG)}, pages
  719--732, 2015.

\bibitem{ChangEtAl90}
M.~S. Chang, N.~F. Huang, and C.~Y. Tang.
\newblock An optimal algorithm for constructing oriented {V}oronoi diagrams and
  geographic neighborhood graphs.
\newblock {\em Inform. Process. Lett.}, 35(5):255--260, 1990.

\bibitem{Clark90}
B.~N. Clark, C.~J. Colbourn, and D.~S. Johnson.
\newblock Unit disk graphs.
\newblock {\em Discrete Math.}, 86(1-3):165--177, 1990.

\bibitem{FuererKasiviswanathan12}
M.~F{\"u}rer and S.~P. Kasiviswanathan.
\newblock Spanners for geometric intersection graphs with applications.
\newblock {\em J. Comput. Geom.}, 3(1):31--64, 2012.

\bibitem{HarPeled11}
S.~Har-Peled.
\newblock {\em Geometric Approximation Algorithms}.
\newblock American Mathematical Society, 2011.

\bibitem{Holm2015}
J.~Holm, E.~Rotenberg, and M.~Thorup.
\newblock Planar reachability in linear space and constant time.
\newblock In {\em Proc. 56th Annu. IEEE Sympos. Found. Comput. Sci. (FOCS)},
  pages 370--389, 2015.

\bibitem{ImaiEtAl85}
H.~Imai, M.~Iri, and K.~Murota.
\newblock {V}oronoi diagram in the {L}aguerre geometry and its applications.
\newblock {\em SIAM J. Comput.}, 14(1):93--105, 1985.

\bibitem{KaplanMuRoSe15}
H.~Kaplan, W.~Mulzer, L.~Roditty, and P.~Seiferth.
\newblock Spanners and reachability oracles for directed transmission graphs.
\newblock In {\em Proc. 31st Int. Sympos. Comput. Geom. (SoCG)}, pages
  156--170, 2015.

\bibitem{KaplanEtAl15b}
H.~Kaplan, W.~Mulzer, L.~Roditty, and P.~Seiferth.
\newblock Reachability oracles for directed transmission graphs.
\newblock \texttt{arXiv:1601.07797}, 2016.

\bibitem{KaplanMuRoSeSh17}
H.~Kaplan, W.~Mulzer, L.~Roditty, P.~Seiferth, and M.~Sharir.
\newblock Dynamic planar {V}oronoi diagrams for general distance functions and
  their algorithmic applications.
\newblock In {\em Proc. 28th Annu. ACM-SIAM Sympos. Discrete Algorithms
  (SODA)}, pages 2495--2504, 2017.

\bibitem{KedemLiPaSh86}
K.~Kedem, R.~Livne, J.~Pach, and M.~Sharir.
\newblock On the union of {J}ordan regions and collision-free translational
  motion amidst polygonal obstacles.
\newblock {\em Discrete Comput. Geom.}, 1:59--70, 1986.

\bibitem{Kirckpatrick83}
D.~Kirkpatrick.
\newblock Optimal search in planar subdivisions.
\newblock {\em SIAM J. Comput.}, 12(1):28--35, 1983.

\bibitem{LofflerMu12}
M.~L{\"o}ffler and W.~Mulzer.
\newblock Triangulating the square and squaring the triangle: quadtrees and
  {D}elaunay triangulations are equivalent.
\newblock {\em SIAM J. Comput.}, 41(4):941--974, 2012.

\bibitem{NarasimhanSmid07}
G.~Narasimhan and M.~H.~M. Smid.
\newblock {\em Geometric spanner networks}.
\newblock Cambridge University Press, 2007.

\bibitem{PelegRoditty10}
D.~Peleg and L.~Roditty.
\newblock Localized spanner construction for ad hoc networks with variable
  transmission range.
\newblock {\em ACM Transactions on Sensor Networks (TOSN)}, 7(3):25:1--25:14,
  2010.

\bibitem{PreparataSh85}
F.~P. Preparata and M.~I. Shamos.
\newblock {\em Computational geometry. An introduction.}
\newblock Springer-Verlag, 1985.

\bibitem{AgarwalSharir96}
M.~Sharir and P.~K. Agarwal.
\newblock {\em {D}avenport-{S}chinzel sequences and their geometric
  applications}.
\newblock Cambridge University Press, 1996.

\bibitem{Thorup04}
M.~Thorup.
\newblock Compact oracles for reachability and approximate distances in planar
  digraphs.
\newblock {\em J. ACM}, 51(6):993--1024, 2004.

\bibitem{RickenbachEtAl09}
P.~von Rickenbach, R.~Wattenhofer, and A.~Zollinger.
\newblock Algorithmic models of interference in wireless ad hoc and sensor
  networks.
\newblock {\em IEEE/ACM Transactions on Networking}, 17(1):172--185, 2009.

\bibitem{Yao82}
A.~C.-C. Yao.
\newblock On constructing minimum spanning trees in $k$-dimensional spaces and
  related problems.
\newblock {\em SIAM J. Comput.}, 11(4):721--736, 1982.

\end{thebibliography}
\end{document}